\renewcommand{\Pr}{\mathbf{Pr}}
\newcommand{\E}{\mathbf{E}}
\newcommand{\G}{\mathcal{G}}
\newcommand{\I}{\mathcal{I}}
\newcommand{\C}{\mathcal{C}}
\newcommand{\Pathway}{\mathcal{D}}
\newcommand{\Set}{\mathcal{S}}
\newcommand{\Patients}{\mathcal{P}}
\newcommand{\BO}[1]{{O}\left(#1\right)}
\newcommand{\BOM}[1]{{\Omega}\left(#1\right)}
\newtheorem{theorem}{Theorem}
\newtheorem{proposition}{Proposition}
\DeclareMathOperator*{\argmax}{arg\,max}
\newcommand{\problemname}{max $k$-set log-rank}
\newcommand{\problemgraphname}{max connected $k$-set log-rank}
\newcommand{\algname}{NoMAS}
\newcommand{\opt}{$OPT$}
\begin{document}
\title{Finding Mutated Subnetworks Associated with Survival in Cancer}
\author{Tommy Hansen\thanks{Department of Mathematics and Computer Science, University of Southern Denmark, Odense (Denmark).}\\\texttt{tvhansen33@gmail.dk}\and Fabio Vandin \thanks{Department of Information Engineering, University of Padova, Padova (Italy).} $^{,*,}$\thanks{Department of Computer Science, Brown University, Providence, RI (USA).} $^,$\thanks{Corresponding author.}\\\texttt{vandinfa@dei.unipd.it}}
\date{}
\maketitle{}

\begin{abstract}

Next-generation sequencing technologies allow the measurement of somatic mutations in a large number of patients from the same cancer type.
One of the main goals in the analysis of these mutations is the identification of mutations associated with clinical parameters, for example survival time. 
The identification of mutations associated with survival time is hindered by the genetic heterogeneity of mutations in cancer, due to the fact that genes and mutations act in the context of \emph{pathways}. To identify mutations associated with survival time it is therefore crucial to study mutations in the context of interaction networks.

In this work we study the problem of identifying subnetworks of a large gene-gene interaction network that have mutations associated with survival. We formally define the associated computational problem by using a score for subnetworks based on the test statistic of the log-rank test, a widely used statistical test for comparing the survival of two given populations. We show that the computational problem is NP-hard and we propose a novel algorithm, called \underline{N}etwork \underline{o}f \underline{M}utations \underline{A}ssociated with \underline{S}urvival (\algname), to solve it. NoMAS is based on the color-coding technique, that has been previously used in other applications to find the highest scoring subnetwork with high probability when the subnetwork score is additive. In our case the score is not additive; nonetheless, we prove that under a reasonable model for mutations in cancer \algname\ does identify the optimal solution with high probability. We test \algname\ on simulated and cancer data, comparing it to approaches based on single gene tests and to various greedy approaches. We show that our method does indeed find the optimal solution and performs better than the other approaches. Moreover, on two cancer datasets our method identifies subnetworks with significant association to survival when none of the genes has significant association with survival when considered in isolation.

\end{abstract}

\newpage
\section{Introduction}
Recent advances in next-generation sequencing technologies have enabled the collection of sequence information 
from many genomes and exomes, with many large human and cancer genetic studies measuring mutations in all genes for a large number of patients of a specific disease \cite{Cancer-Genome-Atlas-Research-Network:2014aa,Cancer-Genome-Atlas-Research-Network:2013aa,Cancer-Genome-Atlas-Network:2015aa}. One of the main challenges in these studies is the interpretation of such mutations, in particular the identification of mutations that are clinically relevant. For example, in large cancer studies one is interested in finding somatic mutations that are associated with survival and that can be used for prognosis and therapeutic decisions.
One of the main obstacles in finding mutations that are clinically relevant is the large number of mutations present in each cancer genome. Recent studies have shown that each cancer genome harbors hundreds or thousands of somatic mutations~\cite{Garraway:2013uq}, with only a small number (e.g., $\le 10$) of \emph{driver} mutations related to the disease, while the vast majority of mutations are \emph{passenger}, random mutations that are accumulated during the process that leads to cancer but not related to the disease~\cite{Vogelstein:2013fk}.

In  recent years, several computational and statistical methods have been designed to identify driver mutations and distinguish them from passenger mutations using mutation data from large cancer studies~\cite{Raphael:2014aa}. Many of these methods analyze each gene in isolation, and use different single gene scores (e.g., mutation frequency, clustering of mutations, etc.) to identify significant genes~\cite{Dees:2012aa,Lawrence:2013aa,Tamborero:2013aa}. While useful in finding driver genes, these methods suffer from the extensive \emph{heterogeneity} of mutations in cancer, with different patients showing mutations in different cancer genes~\cite{Kandoth:2013aa}. One of the reasons of such mutational heterogeneity is the fact that driver mutations do not target single genes but rather \emph{pathways}~\cite{Vogelstein:2013fk}, groups of interacting genes that perform different functions in the cell. Several methods have been recently proposed to identify significant groups of interacting genes in cancer~\cite{Hofree:2013aa,Leiserson:2015aa,Vandin:2012ab,Leiserson:2015ab,shrestha2014hit,kim2015memcover}. Many of these methods integrate mutations with interactions from genome-scale interaction networks, without restricting to already known pathways, that would hinder the ability to discover new important groups of genes.

In addition to mutation data, large cancer studies often collect also clinical data, including survival information, regarding the patients. An important feature of survival data is that it often contains \emph{censored} measurements~\cite{Kalbfleisch:2002fk}: in many studies a patient may be alive at the end of the study or may leave the study before it ends, therefore only a lower bound to the survival of the patient is known. Survival information is crucial in identifying mutations that have a clinical impact. However, this information is commonly used only to assess the significance of candidate genes or groups of genes identified using other computational methods~\cite{Hofree:2013aa,Cancer-Genome-Atlas-Research-Network:2011aa}, as the ones described above, and there is a lack of methods that integrate mutations, interaction information, and survival data to directly identify groups of interacting genes associated with survival.

The field of survival analysis has produced an extensive literature on the analysis of survival data, in particular for the comparison of the survival of two given populations (sets of samples)~\cite{Kalbfleisch:2002fk}. The most commonly used test for this purpose is the log-rank test~\cite{citeulike:7445010,pmid5910392}. 
In genomic studies we are not given two populations, but a single set of samples, and are required to identify  mutations that are associated with survival. The log-rank test can be used to this end to identify single genes associated with survival time by comparing the survival of the patients with a mutation in the gene with the survival of the patients with no mutation in the gene. The other commonly used test, the Cox Proportional-Hazards model~\cite{Kalbfleisch:2002fk}, is equivalent to the log-rank test when the association of a binary feature with survival is tested, as it is in the case of interest to genomic studies.
For a given group of genes, one can \emph{assess} the association of mutations in the genes of the group with survival by comparing the survival of the patients having a mutation in at least one of the genes with the survival of the patients with no mutation in the genes. However, this approach cannot be used to \emph{discover} sets of genes, since one would have to screen all possible subsets of genes and test their association with survival, and the number of subsets of genes to screen is enormous even considering only groups of genes interacting in a protein interaction network (e.g., there are $> 10^{15}$ groups of $8$ interacting genes in HINT+HI2012 network~\cite{Leiserson:2015ab}).

Color-coding is a probabilistic method that was originally described for finding simple paths, cycles and other small subnetworks of size $k$ within a given network \cite{alon1994color}. The core of the color-coding technique is the assignment of random colors to the vertices, as a result of which the search space can be reduced, by restricting the subnetworks under consideration to \textit{colorful} ones, those in which each vertex has a distinct color. For the identification of colorful subnetworks, dynamic programming is employed. The process is repeated until the desired subnetwork has been identified, that is having been colorful at least once, with high probability. When the dynamic programming algorithm is polynomial in $n$ and the subnetworks being screened are of size $k \in O(\log n)$, the overall running time of the color-coding method too remains polynomial in $n$.

 \subsection{Our Contribution}
 
 In this paper we study the problem of finding sets of interacting genes with mutations associated to survival using 
 data from large cancer sequencing studies and interaction information from a genome-scale interaction network. We focus on the widely used log-rank statistic as a measure of the association between mutations in a group of genes and survival.
 Our contribution is threefold: first, we formally define the problem of finding the set of $k$ genes whose mutations show the maximum association to survival time by using the log-rank statistic as a score for a set of genes, and we show that such problem is NP-hard. We show that the problem remains hard when the set of $k$ genes is required to form a connected subnetwork in a large graph with at least one node of large degree (\emph{hub}).
 
 Second, we propose an efficient algorithm, \underline{N}etwork \underline{o}f \underline{M}utations \underline{A}ssociated with \underline{S}urvival (\algname), based on the color-coding technique, to identify subnetworks associated with survival time. Color-coding has been previously used to find high scoring graphs for bioinformatics applications~\cite{dao2011optimally,Hormozdiari:2015aa} when the score for a subnetwork is \emph{set additive} (i.e., the score of a subnetwork is the sum of the scores of the genes in the subnetwork). In our case the log-rank statistic is not set additive, and we prove that there is a family of instances for which our algorithm cannot identify the optimal solution. Nonetheless, we prove that under a reasonable model for mutations in cancer our algorithm identifies the optimal solution with high probability. 
 
Third, we test our algorithm on simulated data and  on data from three large cancer studies from The Cancer Genome Atlas (TCGA). On simulated data, we show that our algorithm does find the optimal 
solution while being much more efficient than the exhaustive algorithm that screens all sets of genes. On cancer data, we show that our algorithm finds the optimal solution for all values of $k$ for which the use of the exhaustive algorithm is feasible, and identifies better solutions (in terms of association to survival) than a greedy algorithm similar to the one used in~~\cite{Reimand:2013aa}.
Moreover, we show that \algname\ identifies better solutions than using an (additive) score (i.e., the same gene score used in~\cite{Vandin:2012aa}) for a set of genes.
For the cancer datasets, we show that our algorithm identifies novel groups of genes associated with survival where none of the genes is associated with survival when considered in isolation.
 
  \subsection{Related Work}

Few methods have been proposed to identify groups of genes with mutations associated with survival in genomic studies. The work of~\cite{Vandin:2012aa} combines mutation and survival data with interaction information using a diffusion process on graphs starting from gene scores derived from $p$-values of individual genes, but did not consider the problem of directly identifying groups of genes associated with survival. The work of~\cite{Reimand:2013aa} combines mutation information and patient survival to identify subnetworks of a kinase-substrate interaction network associated with survival. It only focuses on phosphorylation-associated mutations, and the approach is based on a local search algorithm that builds a subnetwork by starting from one seed vertex and then greedily adding neighbours (at distance at most 2) from the seed, extending the approach used in different types of network analyses~\cite{Chuang:2007aa}. A similar greedy approach is used by \cite{Wu:2012aa} to identify groups of genes significantly associated with survival in cancer from gene expression data. 
For gene expression studies, \cite{chowdhury2011subnetwork} proposes an approach to enumerate dysregulated subnetworks in cancer based on an efficient search space pruning strategy, inspired by previous work on the identification of association rules in databases~\cite{smyth1992information}. \cite{patel2013network} uses the general approach described in \cite{chowdhury2011subnetwork} to identify subnetworks of genes with
expression status associated to survival.

Color-coding has been previously used to count or search for subgraphs of large interaction networks (\cite{Alon:2008aa,Bruckner:2010aa}).  Color-coding has also been used to identify groups of interacting genes in an interaction network that are associated with a phenotype of interest, but restricted to additive scores for sets of genes (i.e., the score of a set is the sum of the scores of the single genes); for example, \cite{dao2011optimally} uses color-coding to find optimally discriminative subnetwork markers that predict response to chemotherapy from a large interaction network by defining a single gene score as $- \log_{10} d(g)$, where $d(g)$ is the discriminative score for gene $g$ (i.e., a measure of the ability of $g$ to discriminate two classes of patients); similarly,
\cite{Hormozdiari:2015aa} uses color-coding to find groups of interacting genes with discriminative mutations in case-control studies, using as gene score the $- \log_{10}$ of the $p$-value from the binomial test of recurrence of mutations in the cases (while limiting the number of mutations in the controls).
 
\section{Methods and Algorithms}

In this section we define the model and the algorithm used in this work. The remaining of the section is organized as follows: Section~\ref{sec:cp} introduces preliminary definitions, the computational problem and presents its computational complexity;  Section~\ref{sec:alg} introduces the algorithm we design to solve the problem; Section~\ref{sec:analysis} presents the analysis of the algorithm, including the analysis under a reasonable model for mutations in cancer. Due to space constraints, proofs are omitted; they will appear in the final version of this extended abstract. (Proof sketches for our results are given in Appendix.)

\subsection{Computational Problem}
\label{sec:cp}

In survival analysis, we are given two populations (i.e., sets of samples) $P_0$ and $P_1$, and for each sample $i \in P_0 \cup P_1$ we have its survival data:  \begin{inparaenum}[i)] \item the survival time $t_i$ and \item the censoring information $c_i$, \end{inparaenum} where $c_i=1$ if $t_i$ is the exact survival time for sample $i$ (i.e., sample $i$ is not censored), and $c_i=0$ if $t_i$ is a lower bound to the survival time for sample $i$ (i.e., sample $i$ is censored).   Let $m_0$ be the number of samples in $P_0$, $m_1$ be the number of samples in $P_1$, and $m=m_0+m_1$ be the  total  number of samples. Without loss of generality, the samples are $\{1,2,\dots,m\}$, the survival times are $t=1,2,\dots,m$, with $t_i=i$ (i.e., the samples are sorted by increasing values of survival), and we assume that there are no ties in survival times. The survival data is represented by two vectors $\mathbf{c}$ and $\mathbf{x}$, with $c_i$ representing the censoring information for sample $i$, and $x_i$ represents the population information: $x_i=1$ if sample $i$ is in population $P_1$, and $x_i=0$ otherwise. 
 Given the survival data for two populations $P_0$ and $P_1$, the significance in the difference of survival between $P_0$ and $P_1$  can be assessed by the widely used log-rank test~\cite{citeulike:7445010,pmid5910392}.
The log-rank statistic is 
\begin{equation}
\label{eq:nlr}
V(\mathbf{x},\mathbf{c}) =\sum_{j=1}^m c_j \left(x_j -  \frac{m_1 -\sum_{i=1}^{j-1} x_i}{m-j+1}\right)
\end{equation}

Under the (null) hypothesis of no difference in survival between $P_0$ and $P_1$, the log-rank statistic asymptotically follows a normal distribution $\mathcal{N}(0,\sigma^2)$, where the standard deviation\footnote{In the literature two different standard deviations (corresponding to two related but different null distributions, permutational and conditional) have been proposed for the normal approximation of the distribution of the log-rank statistic; we have previously shown~\cite{Vandin:2015aa} that the one we use here (corresponding to the permutational distribution) is more appropriate for genomic studies.} is given by:
$\sigma(\mathbf{x},\mathbf{c}) = \sqrt{\frac{m_0 m_1}{m(m-1)} \left( \left(\sum_{j=1}^m c_j\right)  - \sum_{j=1}^m c_i \frac{1}{m-j+1}  \right)}.$

Thus the normalized log-rank statistic, defined as $\frac{V(\mathbf{x},\mathbf{c})}{\sigma(\mathbf{x},\mathbf{c})}$, asymptotically follows a standard normal $\mathcal{N}(0,1)$ distribution, and the deviation of $\frac{V(\mathbf{x},\mathbf{c})}{\sigma(\mathbf{x},\mathbf{c})}$ from $0$ is a measure of the difference in survival between $P_0$ and $P_1$.

In genomic studies, we are given mutation data for a set $\G$ of $n$ genes in a set $\Patients$ of $m$ samples, represented by a mutation matrix $M$ with $M_{i,j}=1$ if gene $i$ is mutated in patient $j$ and $M_{i,j}=0$ otherwise. We are also given survival data (survival time and censoring information) for all the $m$ samples. Given a set $\Set \subset \G$ of genes, we can assess the association of mutations in the set $\Set$ with survival by comparing the survival of the population $P^{\Set}_1$ of samples with a mutation in at least one gene of $\Set$ and the survival of the population $P^{\Set}_0$ of samples with no mutation in the genes of $\Set$. That is, $P^{\Set}_0 = \{j \in \Patients: \sum_{i\in \Set} M_{i,j} = 0 \}$ and $P^{\Set}_1 = \{j \in \Patients: \sum_{i\in \Set} M_{i,j} > 0 \}$.

Given the set $\mathcal{G}$ of all genes for which mutations have been measured, we are interested in finding the set $\Set \subset \mathcal{G}$ with $|\Set| = k$ that 
has maximum association with survival  by finding the set $\Set$ that maximizes the absolute value of the normalized log-rank statistic. Given a set $\Set$ of genes, let $\mathbf{x}^{\Set}$ be a $0-1$ vector, with $x^{\Set}_i=1$ if at least one gene of $\Set$ is mutated in patient $i$, and $x^{\Set}_i=0$ otherwise. The normalized log-rank statistic for the set $\Set$ is then $\frac{V(\mathbf{x}^{\Set}, \mathbf{c})}{\sigma(\mathbf{x}^{\Set}, \mathbf{c})}$. 
Note that for a given set of patients the censoring information $\mathbf{c}$ is fixed, therefore we can consider the log-rank statistic as a function $V(\mathbf{x}^{\Set})$ of $\mathbf{x}^{\Set}$ only. Analogously, we can rewrite $\sigma(\mathbf{x}^{\Set},\mathbf{c}) =  \sigma(\mathbf{x}^{\Set}) f(\mathbf{c})$, where $ \sigma(\mathbf{x}^{\Set}) = \sqrt{m_1 (m -m_1) }$ with $m_1 = | P^{\Set}_1|$, and  $f(\mathbf{c}) =  \sqrt{\frac{1}{m(m-1)} \left( \left(\sum_{j=1}^m c_j\right)  - \sum_{j=1}^m c_j \frac{1}{m-j+1}  \right)}$ does not depend on  $\mathbf{x}^{S}$ and is fixed given $\textbf{c}$. 

To identify the set of $k$ genes most associated with survival, we can then consider the score $|w(\Set)| = \left|\frac{V(\mathbf{x}^{\Set})}{\sigma(\mathbf{x}^{\Set})}\right|$. For ease of exposition in what follows we consider the score $w(\Set)$, corresponding to a one tail log-rank test for the identification of sets of genes with mutations associated with reduced survival; the identification of sets of genes with mutations associated with increased survival is done in an analogous way by maximizing the score $-w(\Set)$.
We define the following problem.

\begin{problem}{The \problemname\ problem}
Given a set $\mathcal{G}$ of genes, an $n \times m$ mutation matrix $M$ and the survival information (time and censoring) for the $m$ patients in $M$, find the set ${\Set} \subset \mathcal{G}$ of $k$ genes maximizing $w({\Set})$.
\end{problem}

We have the following.

\begin{theorem}
\label{thm:nphard}
The \problemname\  problem is NP-hard.
\end{theorem}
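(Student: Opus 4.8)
The plan is to give a polynomial-time reduction from the decision version of \emph{Set Cover} (equivalently one could use \emph{Exact Cover by 3-Sets}), which is NP-complete. The starting observation is the one already recorded in Section~\ref{sec:cp}: once the patients, and hence $\mathbf{c}$, are fixed, the score $w(\Set)$ depends on $\Set$ only through the set $P_1^{\Set}=\{j\in\Patients:\sum_{i\in\Set}M_{i,j}>0\}$ of patients ``covered'' by $\Set$. So \problemname\ is exactly the problem of choosing $k$ columns of $M$ so that the union of their supports maximizes the induced set function $A\mapsto V(\mathbf{1}_A)/\sigma(\mathbf{1}_A)$.

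\textbf{Construction.} Given a Set Cover instance with universe $U=\{e_1,\dots,e_u\}$, family $F_1,\dots,F_n\subseteq U$ and integer $k$ (we may assume $n\ge k$, as otherwise the instance is solved directly), create one gene $g_i$ per set $F_i$ with $M_{i,\ell}=1$ iff $e_\ell\in F_i$; one ``element patient'' $p_\ell$ per element $e_\ell$; and $D$ additional ``background patients'' mutated in no gene, for a suitable $D=\mathrm{poly}(u,n)$. Assign survival times so that $p_1,\dots,p_u$ receive the \emph{earliest} times $1,\dots,u$ and the background patients receive times $u+1,\dots,u+D$, and set $c_j=1$ for every $j$ (no censoring). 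Then for any $\Set$ with $|\Set|=k$ we have $P_1^{\Set}=\{\ell:e_\ell\in\bigcup_{i\in\Set}F_i\}\subseteq\{1,\dots,u\}$, and $P_1^{\Set}=\{1,\dots,u\}$ iff $\{F_i:i\in\Set\}$ is a set cover; moreover a cover of size $\le k$ exists iff one of size exactly $k$ does, by adding arbitrary extra sets.

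\textbf{Correctness.} It suffices to show that the optimal value of the constructed instance equals the explicitly computable number $w(\{1,\dots,u\})$ precisely when a size-$k$ set cover exists. Since $\sigma(\mathbf{1}_A)=\sqrt{|A|(m-|A|)}$ depends on $A$ only through $|A|$, formula~(\ref{eq:nlr}) shows that replacing an element of $A$ by a smaller index not in $A$ can only increase $V(\mathbf{1}_A)$; hence among all sets of a fixed size the score is maximized by the initial segment $\{1,\dots,|A|\}$, and for any non-covering $\Set$ we get $w(\Set)\le w(\{1,\dots,|P_1^{\Set}|\})$ with $|P_1^{\Set}|\le u-1$, whereas a covering $\Set$ attains $w(\{1,\dots,u\})$. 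The reduction thus goes through provided $w(\{1,\dots,a\})$ is strictly increasing in $a$ for $1\le a\le u$; a direct estimate with harmonic sums gives $w(\{1,\dots,a\})=\sqrt{a/m}\,(1+o(1))$ in the regime $a\ll m$, so taking $D$ large enough (e.g.\ $D=u^2$) secures this monotonicity and also keeps $\sigma>0$ since $|P_1^{\Set}|\le u<m$.

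\textbf{Main obstacle.} The reduction itself is straightforward; the technical heart is the monotonicity claim ``$w(\{1,\dots,a\})$ is strictly increasing for $1\le a\le u$'', which is exactly where the non-additivity of the log-rank score — the interplay between the numerator $V$ and the size-dependent normalizer $\sigma$ — must be controlled, and where the choice of the background size $D$ enters. The remaining points (ensuring $P_1^{\Set}\notin\{\emptyset,\Patients\}$ so that $w$ is well defined, and the padding that turns ``cover of size $\le k$'' into ``cover of size exactly $k$'') are routine once $D$ is fixed.
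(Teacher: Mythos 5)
Your proposal is correct and is essentially the paper's own argument: a reduction from set cover in which genes encode the sets, the universe elements become the earliest-surviving uncensored patients, later mutation-free patients pad the cohort, and correctness rests on the same three facts (the prefix weights $w_i$ are positive/decreasing, the initial segment maximizes $V$ among fixed-size covered sets, and the normalized score is strictly increasing in the number of covered elements). The only cosmetic differences are that the paper pads with $3|\mathcal{U}|$ censored patients rather than $u^2$ uncensored ones and phrases the reduction via minimum set cover rather than its decision version; your monotonicity step is sketched at about the same level of detail as the paper's ("(iii) can be proved by induction") and does go through with your choice of $D$.
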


We now define the \problemgraphname\ problem that is analogous to the \problemname\ problem but requires feasible solutions to be connected subnetworks of a given graph $\I$, representing gene-gene interactions.

\begin{problem}{The \problemgraphname\ problem}
Given a set $\G$ of genes, a graph $\I = (\G, E)$ with $E \subseteq \G \times \G$,  an $n \times m$ mutation matrix $M$ and the survival information (time and censoring) for he $m$ patients in $M$, find the set ${\Set}$ of $k$ genes maximizing $w({\Set})$ with the constraint that the subnetwork induced by ${\Set}$ in $\I$ is connected.
\end{problem}

If $\I$ is the complete graph, the \problemgraphname\ problem is the same as the \problemname\ problem. Thus, the \problemgraphname\ problem is NP-hard for a general graph. However, we can prove that the problem is NP-hard for a much more general class of graphs.

\begin{theorem}
The \problemgraphname\ problem on graphs with at least one node of degree $\BO{n^\frac{1}{c}}$, where $c>1$ is constant, is NP-hard.
\end{theorem}

\subsection{Algorithm}
\label{sec:alg}

We design a new algorithm, \underline{N}etwork \underline{o}f \underline{M}utations \underline{A}ssociated with \underline{S}urvival (\algname), to solve the \problemgraphname\ problem. The algorithm 
is based on an adaptation of  the color-coding technique~\cite{alon1994color}. Our algorithm is analogous to other color-coding based algorithms that have been used before to identify subnetworks associated with phenotypes in other applications where the score is additive~\cite{dao2011optimally,Hormozdiari:2015aa}.

The input to \algname\ is an undirected graph $G=(V,E)$, an $n \times m$ mutation matrix $M$, and the survival information $\mathbf{x},\mathbf{c}$ for the $m$ patients in $M$.  \algname\ first identifies a subnetwork $\Set$ with high score $\frac{w(\mathbf{x}^{\Set})}{\sigma(\mathbf{x}^{\Set})}$, and then uses a permutation test to assess the significance of the subnetwork. 

To identify a subnetwork of high weight, the algorithm proceeds in iterations. In each iteration \algname\ colors $G$ with $k$ colors by assigning to each vertex $v$ a color $\mathcal{C}(v) \in \{1,\dots,k\}$ chosen uniformly at random. 
For a given coloring of $G$, a subnetwork $\mathcal{S}$ is said to be \textit{colorful} if all vertices in $\mathcal{S}$ have distinct colors. The  \textit{colorset} of $\mathcal{S}$ is the set of colors of the vertices in $\mathcal{S}$ . Note that the number of different colorsets (subsets of $\{1,\dots,k\}$) is $2^k$.
In each iteration the algorithm efficiently identifies high-scoring colorful subnetworks, and at the end the highest-scoring subnetwork among all iterations is reported.

Consider a given coloring of $G$. Let $W$ be a $(2^k-1) \times |V|$ table with a row for each non-empty colorset and a column for each vertex in $G$. Entry $W(T, u)$ stores the set of vertices of one connected colorful subnetwork that has colorset $T$ and includes vertex $u$. 
Entries of $W$ can be filled by dynamic programming. For colorsets of size $1$, the corresponding rows in $W$ are filled out trivially: $W(\{\alpha\}, u) = \{u\}$ if $\alpha = \mathcal{C}(u)$, and $W(\{\alpha\}, u) = \emptyset$ otherwise.

\noindent For entry $W(T, u)$ with $|T| \geq 2$, \algname\ computes $W(T, u)$ by combining a previously computed $W(Q, u)$ for $u$ with another previously computed $W(R, v)$ where $v$ is  a neighbor of $u$ in $G$, ensuring that the resulting subnetwork is connected and contains $u$. Colorfulnes is ensured by selecting $Q$ and $R$ such that $Q \cap R = \emptyset$ and $Q \cup R = T$, and in turn ensures that $W(T, u)$ contains $|T|$ distinct vertices.
Note that for a given $T$ the choice of $Q$ uniquely defines $R$. Thus, for each neighbor $v$ of $u$ there are (at most) $2^{|T|-1}$ possible combinations.
Let $\mathcal{S}^\prime(T,u)$ be the set of all colorful subnetworks that can be obtained by combining an entry $W(Q,u)$ for  $u$ and an appropriate entry $W(R,v)$ for a neighbor $v$ of $u$ so that $Q \cup R = T, Q \cap R = \emptyset$. That is:

$\mathcal{S}^\prime(T,u) = \bigcup_{\substack{v:(u,v) \in E\\ Q\cup R = T, Q \cap T = \emptyset}} \left\{ W(Q,u) \cup W(R,v) \right\}.$

(In the definition of $\mathcal{S}^\prime(T,u)$ we assume that the union with $\emptyset$ returns $\emptyset$.)
$W(T,u)$ stores the element of $\mathcal{S}^\prime(T,u) $ with largest value of our objective function, that is
$W(T, u) = \argmax_{\mathcal{S} \in \mathcal{S}^\prime(T, u)} w(\mathcal{S}).$
At the end, the best solution is identified by finding the entry of $W$ of maximum weight. (See Appendix for pseudo code and illustrations of \algname).

After identifying the best solution $\mathcal{S}$ for the mutation matrix $M$, \algname\ assesses its statistical significance by \begin{inparaenum}[i)] \item estimating the p-value $p(\mathcal{S})$ for the log-rank statistic (using a Monte-Carlo estimate with $10^8$ samples), and then \item using a permutation test in which $\mathcal{S}$ is compared to the best solution $\mathcal{S}^p$ for the mutation matrix $M^p$ obtained by randomly permuting the rows of $M$. \end{inparaenum} A total of $100$ permutations are performed and the \textit{permutation} p-value is recorded as the ratio of permutations in which $w({S}^p) \geq w(\mathcal{S})$.
While the $p$-value from the log-rank test reflects the association between mutations in the subnetwork and survival, the permutation $p$-value assesses whether a subnetwork with association with survival at least as extreme as the one observed in the input data can be observed when the genes are placed randomly in the network.
Note that we can identify multiple solutions by considering different entries of $W$ (even if the same solution may appear in multiple entries of $W$), and we obtain a permutation $p$-value for the $i$-th top scoring solution by comparing its score with the score of the $i$-th top scoring solution in the permuted datasets.
 Analogously, \algname\ identifies sets that minimize $w(\Set)$ (sets associated to increased survival) by maximizing the score $-w(\Set)$.

\paragraph{Parallelization.} The computation of $W$ is parallelized using $N \leq |V|$ processors. All entries of $W$ are kept in shared memory and  $|V|/N$ unique columns uniformly at random are assigned to each processor.

Entries of $W$ are computed in order of increasing colorset sizes. We define the $i$-th \emph{colorset group} as the set of all $\binom{k}{i}$ colorsets of size $i$. We exploit the fact that rows within the $i$-th colorset group are computed by reading entries exclusively from rows belonging to colorset groups $< i$. 
When a processor has finished the rows of the $i$-th colorset group it waits for the other processors to do the same. When the last processor completes the $i$-th colorset group, all $N$ processors can safely begin to compute rows of colorset group $i+1$.
In total, $k$ synchronization steps are carried out, one for each colorset group.

\subsection{Analysis of \algname}
\label{sec:analysis}

We consider the performance of \algname\ excluding the permutation test.
The log-rank statistic $w(\mathcal{S})$ is computed in time $\BO{m_1} \in \BO{m}$.  The total time complexity for computing a single entry $W(T, u)$ is then bounded by $\BO{m\text{deg}(u)2^{|T|-1}} \in \BO{m\text{deg}(u)2^k}$, where $deg(u)$ is the degree of $u$ in $G$. Given a coloring of $G$, the computation of the entire table can thus be performed in time $\BO{2^k \sum_{u\in V} m \text{deg}(u) 2^{k}} \in \BO{m|E|4^k}$. If $L$ iterations are performed, then the complexity of the algorithm is $\BO{Lm|E|4^k}$.

Let \opt\ be the optimal solution. If the score $w(\Set)$ was set additive, as the scores considered in previous applications of color-coding for optimization problems on graphs, to discover \opt\ it would be sufficient that \opt\ be colorful, that happens with probability $k!/k^k \geq e^{-k}$ for each random coloring. Therefore $\BO{\ln(1/\delta)e^k}$ iterations would be enough to ensure that the probability of \opt\ not being discovered is $\le \delta$, resulting in an overall time complexity of $\BO{m\ln(1/\delta)|E|(4e)^k}$.

However, our score $w(\Set)$ is not set additive (e.g., if two genes in $\Set$ have a mutation in the same patient the weight of the patient is considered only once in $w(\Set)$). Therefore, while \opt\ being colorful is still  a necessary condition for the algorithm to identify \opt, the colorfulness of \opt\ is not a sufficient condition.
In fact, we have the following.

\begin{proposition}
\label{thm:badfamily}
For every $k \geq 3$ there is a family of instances of the \problemgraphname\ problem and colorings for which \opt\ is not found by \algname\ when it is colorful.
\end{proposition}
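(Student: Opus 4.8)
The plan is to construct, for each $k\ge 3$, an explicit instance and a coloring under which \opt\ is colorful yet the table $W$ never contains it.

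\paragraph{The instance.} I take $\I$ to be a ``spider'': a center $v_2$; pendant vertices $v_1,v_1'$ and pendant vertices $v_3,v_3'$, all adjacent to $v_2$; and a pendant path $v_2-u_4-u_5-\cdots-u_k$ (empty when $k=3$). The coloring assigns color $1$ to $v_1,v_1'$, color $2$ to $v_2$, color $3$ to $v_3,v_3'$, and color $i$ to $u_i$. Let the survival data be uncensored, let $v_2$ and every $u_i$ be mutation-free, and let $v_1,v_1',v_3,v_3'$ be mutated exactly in patient sets $\alpha,\alpha',\beta,\beta'$, respectively. Then $S^{\ast}:=\{v_1,v_2,v_3,u_4,\dots,u_k\}$ is a connected colorful $k$-subnetwork, so it is colorful under this coloring. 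Since colors $2,4,\dots,k$ are each carried by a unique vertex, the patient-cover of \emph{any} connected $k$-subnetwork is a union of a sub-collection of $\{\alpha,\alpha',\beta,\beta'\}$ of size at least two; because $w(\Set)$ depends only on this cover, $S^{\ast}$ will be the \emph{unique} optimum precisely when $w(\alpha\cup\beta)$ strictly exceeds $w$ of every other such union, and in particular I will need $w(\alpha')>w(\alpha)$ and $w(\beta')>w(\beta)$.

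\paragraph{Choosing the patient sets (where non-additivity is used).} With uncensored data, $w(\Set)=\big(\sum_{p\in\mathrm{cov}(\Set)}h(p)\big)\big/\sqrt{m_1(m-m_1)}$, where $m_1=|\mathrm{cov}(\Set)|$ and $h(p)=1-(H_m-H_{m-p})$ is strictly decreasing in $p$. The numerator is additive over patients but the normalization is not, so adding one patient can \emph{raise} $w$ when the current cover is small yet \emph{lower} it when the cover is large; this is the slack I exploit. I take $\alpha=\{A\}$, $\alpha'=\alpha\cup\{A_2\}$, $\beta=\{B_1,B_2\}$, $\beta'=\beta\cup\{B_3\}$, with $A,B_1,B_2$ very early and $A_2,B_3$ intermediate, so $|\alpha|=1$, $|\alpha'|=|\beta|=2$, $|\beta'|=3$; the asymmetry of these cardinalities is what makes the required inequalities compatible. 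Translating the requirements of the previous paragraph into bounds on $h(A_2)$ and $h(B_3)$ in terms of $h(A)$ and $h(B_1)+h(B_2)$ yields a small linear system, and a short computation shows it is feasible --- for instance with $h(B_1)+h(B_2)$ close to $2\,h(A)$ and $h(A_2),h(B_3)$ a little below $h(A)/2$; for $m$ large these target values of $h$ are attained by actual survival times because $h$ changes in steps of size $O(1/m)$ in the early range.

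\paragraph{Why the algorithm fails.} The inequalities $w(\alpha')>w(\alpha)$ and $w(\beta')>w(\beta)$, together with the mutation-free center, force $W(\{1,2\},v_2)=\{v_1',v_2\}$ and $W(\{2,3\},v_2)=\{v_2,v_3'\}$; more generally, any entry $W(T,v_2)$ whose colorset contains both $1$ and $3$ is filled using the \emph{primed} vertices. One then checks that $S^{\ast}$ can never be produced as $W(Q,u)\cup W(R,v)$: the pendant structure forces $W(Q,u)=\{u\}$ when $u$ is a side vertex, and for $u$ on the path it reduces the recursion to the sub-spiders $\{v_1,v_2,v_3,u_4,\dots,u_j\}$, whose assembly must go through $W(\{1,2,3\},v_2)$ --- so the recursion always bottoms out at a corrupted $v_2$-entry. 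Concretely, for $k=3$ the size-$k$ row $W(\{1,2,3\},\cdot)$ takes values only in $\{\{v_1,v_2,v_3'\},\{v_1',v_2,v_3'\},\{v_1',v_2,v_3\}\}$, never $S^{\ast}=\{v_1,v_2,v_3\}$, so \opt\ is absent from $W$; for $k\ge4$ the pendant path only adds cases of the same flavor, each of which would need some sub-piece of $S^{\ast}$ through $v_2$ to be stored, which never happens. Hence \opt\ is colorful but is not returned by \algname.

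\paragraph{Main obstacle.} The delicate point is the feasibility of the system in the second step: ``$\alpha'$ must beat $\alpha$'' and ``$\alpha\cup\beta$ must beat $\alpha'\cup\beta$'' push $h(A_2)$ in opposite directions (and likewise $h(B_3)$), so the feasible window is narrow and it is not a priori obvious that any choice of cardinalities keeps it nonempty; carrying out this bookkeeping, and checking that the window survives the $O(1/m)$ discretization of $h$, is the heart of the proof.
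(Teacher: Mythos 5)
Your construction has the same skeleton as the paper's: a mutation-free hub, two pairs of leaves sharing a color (a ``good'' copy belonging to \opt\ and a ``bad'' decoy), and mutations arranged so that every table entry at the hub whose colorset contains a shared color stores the decoy rather than the good copy, so that \opt\ is never assembled even though it is colorful. Your structural argument---that any assembly of $S^{\ast}$ must pass through a hub entry containing $v_1$ or $v_3$, and that those entries are always corrupted---is sound. The gap is in the step you yourself flag as the heart of the proof: you never verify that the inequality system on $h(A)$, $h(A_2)$, $h(B_1)+h(B_2)$, $h(B_3)$ is feasible, and the sample values you propose do not satisfy it. Writing $a=h(A)$, $a_2=h(A_2)$, $b=h(B_1)+h(B_2)$, $b_3=h(B_3)$ and taking $b=2a$ as you suggest, the corruption conditions force (in the large-$m$ limit) $a_2>(\sqrt{2}-1)a\approx 0.414a$ and $b_3>(\sqrt{3/2}-1)\cdot 2a\approx 0.450a$, while the requirement that $w(\alpha\cup\beta)$ beat the all-four union $w(\alpha'\cup\beta')$ forces $a_2+b_3<(\sqrt{5/3}-1)\cdot 3a\approx 0.873a$. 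Your suggestion ``$h(A_2),h(B_3)$ a little below $h(A)/2$'' gives $a_2+b_3\approx a$, violating the last constraint; with those values $\{v_1',v_2,v_3'\}$ together with the path outscores $S^{\ast}$, so \opt\ is not even the optimum and the proposition is not proved. The window $(0.864a,\,0.873a)$ for $a_2+b_3$ is in fact nonempty, so the construction can be rescued, but only by pinning $a_2$ and $b_3$ essentially at their lower bounds and then checking that the $O(1/m)$ discretization of $h$ and the $O(1/m)$ corrections to $\sqrt{j(m-j)}$ do not close a window of width below $0.01a$. That bookkeeping is exactly the missing content, and it is not a routine omission, since the binding constraint (the all-four union) is the one your sketch glosses over.

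The paper avoids this fragility with a cleaner gadget. It censors all but the first quarter of the patients, so only those carry (positive) weights; each good leaf is mutated in exactly two of them, each decoy in the two patients of largest weight, and \opt\ is padded so that its cover is exactly the set of positive-weight patients. Then, when a good leaf would first enter a hub entry, the good and bad copies add the same number of new patients, the denominators $\sqrt{m_1(m-m_1)}$ coincide, and the comparison collapses to comparing additive numerators, where the decoy wins outright. In other words, the paper engineers the instance so that the non-additivity never has to be reasoned about quantitatively, whereas your version leans on it and inherits a delicate, nearly infeasible linear system. If you keep your spider topology, giving each good/bad pair equal mutation counts in this way eliminates the system entirely and makes the argument robust for all $k\ge 3$.
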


Even more, we prove that when mutations are placed arbitrarily then for every subnetwork $\Set$ and a given coloring of $\Set$,  \emph{any} color-coding algorithm that adds subnetworks of size $k$ to $W$ by merging neighboring subnetworks of size $< k$ could be ``fooled'' to not add $\Set$ to $W$ by simply adding $3$ vertices to $G$ and assigning them a specific color.

\begin{theorem}
\label{thm:colorbad}
For any optimal colorful connected subnetwork $\mathcal{S}$ of size $k \geq 3$ and any color-coding algorithm $\mathcal{A}$ which obtains subnetworks with colorsets of cardinality $i$ by combining $2$ subnetworks with colorsets of cardinality $< i$, by adding $3$ neighbors to $\mathcal{S}$ we have that $\mathcal{A}$ may not discover $S$ .
\end{theorem}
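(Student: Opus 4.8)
\emph{Proof sketch.}
The plan is to first characterize when $\mathcal{A}$ ``discovers'' $\mathcal{S}$ and then to choose the three added vertices so that this characterization fails. Arrange the instance so that $\mathcal{S}$ is the \emph{unique} optimal colorful connected subnetwork of size $k$; then $w(W(\{1,\dots,k\},u))\le w(\mathcal{S})$ for every vertex $u$, with equality only if $W(\{1,\dots,k\},u)=\mathcal{S}$, so $\mathcal{A}$ reports $\mathcal{S}$ iff some top-level entry equals $\mathcal{S}$. Since $\mathcal{A}$ builds a subnetwork with colorset $T$ only by combining two previously stored subnetworks whose colorsets partition $T$, having $W(\{1,\dots,k\},u)=\mathcal{S}$ forces a recursive \emph{assembly} of $\mathcal{S}$: a rooted binary decomposition of $\mathcal{S}$ into connected subnetworks, bottoming out at singletons, in which the piece sitting at every node is exactly the subnetwork retained in its $W(\cdot,\cdot)$ entry. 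Hence it suffices to \emph{poison} a set of entries that hits every such assembly, where poisoning $W(T,w)$ means producing a colorful connected subnetwork on colorset $T$ that contains $w$ and has strictly larger score than the piece of $\mathcal{S}$ on colorset $T$, so that $\mathcal{A}$ retains that competitor instead.

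For $k=3$ the hitting set is explicit. Write $\mathcal{S}=\{s_1,s_2,s_3\}$ with $\mathcal{C}(s_i)=i$; the only nontrivial connected pieces of $\mathcal{S}$ are the three pairs, so every assembly of $\mathcal{S}$ must at its bottom level create some pair $\{s_a,s_b\}$ (with $s_as_b$ an edge of $\mathcal{S}$) and retain it in $W(\{a,b\},s_a)$ or in $W(\{a,b\},s_b)$. I add three vertices $z_1,z_2,z_3$ with $\mathcal{C}(z_\ell)=\ell$, joining $z_\ell$ to the two vertices $s_i$ of $\mathcal{S}$ with $i\ne\ell$. Then $\{s_i,z_\ell\}$ is a colorful connected subnetwork on colorset $\{i,\ell\}$ containing $s_i$, so it is enough to choose the mutation matrix so that $w(\{s_i,z_\ell\})>w(\{s_i,s_\ell\})$ for every $i\ne\ell$: each $z_\ell$ then poisons the two entries $W(\{\ell,i\},s_i)$, $i\ne\ell$, the six relevant entries are all poisoned, and no assembly of $\mathcal{S}$ survives, so every top-level entry holds a proper subnetwork of smaller score.

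The technical core---and where I expect the main difficulty---is building the mutation data so that two opposing requirements hold at once: (i) $w(\{s_i,z_\ell\})>w(\{s_i,s_\ell\})$ for all $i\ne\ell$, so that at the two-gene scale swapping a gene of $\mathcal{S}$ for its impostor strictly helps; but (ii) every colorful connected three-gene subnetwork using at least one impostor has score strictly below $w(\mathcal{S})$, so that $\mathcal{S}$ stays the unique optimum and is the only subnetwork $\mathcal{A}$ could legitimately report. This is exactly the lack of set-additivity of the log-rank statistic turned against any merging algorithm, in the spirit of Proposition~\ref{thm:badfamily}: because $w(\cdot)$ depends only on the \emph{union} of the mutated patients, an impostor $z_\ell$ can be given a couple of private mutations---one in a patient with very early death, which lifts a small subnetwork's score, and one in a late-death or end-of-study censored patient, which only comes into play (negatively) once the other genes of $\mathcal{S}$, carrying their own private decisive mutations, are already present---so that adding $z_\ell$ is favorable against a single $s_i$ but unfavorable against $\{s_i,s_j\}$ and against all of $\mathcal{S}$. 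Concretely I would give each of the six genes a constant-size mutation set supported on a handful of designated patients with hand-picked survival and censoring values and verify (i)--(ii) by direct evaluation of $V/\sigma$; making the ``helpful at small scale, harmful at full scale'' reversal strict for all relevant pairs and triples simultaneously is the delicate step, and it is this that forces three added vertices rather than one.

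Finally, for general $k\ge 3$ the plan is to localize the obstruction to a three-vertex core of $\mathcal{S}$: pick a vertex $s^\ast$ whose deletion leaves $\mathcal{S}$ connected together with (up to) two of its $\mathcal{S}$-neighbors, make the mutation sets of the remaining $k-3$ genes ``inert'' (private patients with middling, non-decisive survival times) so that the decisive comparisons between $\mathcal{S}$ and any competitor differing only in which core-vs-impostor vertices it uses reduce to the three-gene subproblem, and replay the $k=3$ gadget on this triple with the three added vertices attached as before, poisoning the entries that any assembly of $\mathcal{S}$ is forced to create when it first joins the triple to the rest. The extra bookkeeping---checking that three added vertices still hit every assembly irrespective of the internal shape of $\mathcal{S}$ (path, clique, star, \dots) and that the inert genes do not spoil the comparisons---is the second point needing care, and should follow the $k=3$ pattern once the reduction is in place.
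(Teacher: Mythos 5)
Your framework is the right one, and your $k=3$ gadget is essentially the construction the paper uses: the three added vertices are color-matched ``impostors'' of vertices of $\mathcal{S}$, and the non-additivity of $w$ is exploited so that swapping in an impostor wins at the two-gene scale but loses at the full scale. Your observation that $W(\{1,\dots,k\},u)=\mathcal{S}$ forces a full binary decomposition of $\mathcal{S}$ into stored entries, so that it suffices to poison a hitting set of entries, is a clean and correct articulation of what must be shown. (Both you and the paper leave the actual planting of mutations satisfying (i)--(ii) at the level of an assertion, so that is not the issue.)

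The genuine gap is the reduction from general $k$ to the three-vertex core. First, an assembly of $\mathcal{S}$ has no well-defined moment at which it ``first joins the triple to the rest'': the three core vertices can be absorbed one at a time, interleaved arbitrarily with the other $k-3$ vertices, so ``the entries that any assembly is forced to create'' is not a finite set you have exhibited, and making the remaining genes inert does not by itself collapse the problem to $k=3$ (the pieces that must be poisoned can have up to $k-1$ vertices and contain only one core vertex). The paper closes this with a pigeonhole over the binary merge tree: with three twinned vertices, every decomposition contains a lowest merge whose two sides each contain at least one twinned vertex, and it is exactly that merge which the adversary blocks, by arranging that the side containing a twinned vertex is out-scored in its $W(T,u)$ entry by the same piece with that vertex replaced by its twin. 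Second, for that swap to yield a valid competitor --- a connected subnetwork with the same colorset that the algorithm actually considers at the same entry --- the impostor must have the \emph{same adjacency as its twin to all of} $\mathcal{S}$, not merely to the other two core vertices; your ``attached as before'' only guarantees this when $k=3$. Repairing the general case therefore amounts to adopting the paper's construction (exact color-and-connectivity twins of three vertices of $\mathcal{S}$) together with the merge-tree pigeonhole; the localization-to-a-triple route does not go through as stated.
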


Intuitively, Proposition~\ref{thm:badfamily} and Theorem~\ref{thm:colorbad} show that if mutations are placed adversarially (and the optimal solution \opt\ has many neighbors), our algorithm may not identify \opt.
However, we prove that our algorithm identifies the optimal solution under a generative model for mutations, that we deem the \emph{Planted Subnetwork Model}. We consider $w(\Set)$ as the unnormalized version of the log-rank statistic. In this model:
\begin{inparaenum}[i)]
\item\label{cond1} there is a subnetwork $\Pathway$, $|\Pathway| = k$, with $w(\Pathway) \ge c m $, for a constant $c > 0$; 
\item\label{cond2} each gene $g \in \Pathway$ is such that $w(\Pathway) - w(\Pathway \setminus \{g\}) \ge \frac{c' m}{k}$, for a constant $c' > 0$;
\item\label{cond3} for each gene $g \in \Pathway$: $w(\{g\}) > 0$;
\item\label{cond4} for each gene $\hat{g} \notin \Pathway$, $\hat{g}$ is mutated with probability $p_g$ in each patient, independently of all other events (and of survival time and censoring status).
\end{inparaenum}

Intuitively, \ref{cond1}) above states that the subnetwork $\Pathway$ has mutations associated with survival; \ref{cond2}) states that each gene $g\in \Pathway$ contributes to the association of mutations in $\Pathway$ to survival; \ref{cond3}) states that each gene $g\in \Pathway$ should have the same association to survival (increased or decreased) as $\Pathway$; and \ref{cond4}) states that all mutations outside $\Pathway$ are independent of all other events (including survival time and censoring of patients).

We show that when enough samples are generated from the model above, our algorithm identifies the optimal solution with the same probability guarantee given by the color-coding technique for additive scores.

\begin{theorem}
Let $M$ be a mutation matrix corresponding to $m$ samples from the Planted Subnetwork Model. If $m \in \BOM{k^4(k+\varepsilon) \ln n}$ for a given constant $\varepsilon > 0$ and $\BO{\ln(1/\delta)e^k}$ color-coding iterations are performed, then our algorithm identifies the optimal solution $\Pathway$ to the \problemgraphname\ with probability $\ge 1 - \frac{1}{n^\varepsilon} - \delta$.
\end{theorem}

\section{Results}
We assessed the  performance of \algname\ by using simulated and cancer data. We compared \algname\ to the exhaustive algorithm that identifies the subnetwork of $k$ vertices with the highest score $w(S)$ for the values of $k$ for which we could run the exhaustive algorithm (we implemented a parallelized version of the algorithm described in~\cite{maxwell2014efficiently} to efficiently enumerate all connected subnetworks), to three variants of a greedy algorithm similar to the one from~\cite{Reimand:2013aa}, and to the use of a score given by the sum of single gene scores.  Cancer data is obtained from The Cancer Genome Atlas (TCGA). In particular, we consider somatic mutations (single nucleotide variants and small indels) for $268$ samples of glioblastoma multiforme (GBM), $315$ samples of ovarian adenocarcinoma (OV), and $174$ samples of lung squamous cell carcinoma (LUSC) for which survival data is available.

For all our experiments we used as interaction graph $G$ the graph derived from the application of a diffusion process on the HINT+HI2012 network\footnote{\texttt{http://compbio-research.cs.brown.edu/pancancer/hotnet2/}}, a combination of the HINT network~\cite{Das:2012aa} and the HI-2012~ \cite{Yu:2011aa}set of protein-protein interactions, previously used in~\cite{Leiserson:2015aa}. The details of the diffusion process are described in~\cite{Leiserson:2015aa}. In brief, for two genes $g_i,g_j$ the diffusion process gives the amount of heat $h(g_i,g_j)$ observed on $g_j$ when $g_i$ has one mutation, and the amount of heat $h(g_j,g_i)$ observed on $g_i$ when $g_j$ has one mutation. The graph used for our analyses is obtained retaining an edge between $g_i$ and $g_j$ if $\max\{h(g_i,g_j), h(g_j,g_i)\} \geq 0.012$. The resulting graph has $9859$ vertices and $42480$ edges, with the maximum degree of a node being $438$. In all our experiments we removed mutations in genes mutated in $< 3$ of the samples. For cancer data, this resulted in $780$ mutated genes in GBM, $890$ in OV, and $2915$ in LUSC.

The machine, on which all our experiments were carried out, consists of two CPUs of the type Intel Xeon E5-2698 v3 (2.30GHz), each with 16 physical cores, for a total of 64 virtual cores, and 16 banks of Samsung 32GB DDR4 (2133 MHz) memory modules for a total of 512GB of memory.

The remaining of the section is organized as follow: Section~\ref{sec:res_sim} presents the results on simulated data, while Section~\ref{sec:res_cancer} presents the results on cancer data. 

\subsection{Simulated data}
\label{sec:res_sim} 
We assess the performance of \algname\ on simulated data generated under the Planted subnetwork Model. The subnetwork $\Pathway \subset \G, |\Pathway| = k$ associated with survival is generated by a random walk on the graph $G$. We model the association of $\Pathway$ to survival by mutating with probability $p$ one gene of $\Pathway$ chosen uniformly at random in each sample among the $\frac{m}{4}$ of lowest survival.
All other genes in $\Pathway$ are mutated independently with probability $0.01$ in all samples, to simulate passenger mutations (not associated with survival) in $\Pathway$~\cite{Lawrence:2013aa}.
For genes in $\G \setminus \Pathway$, we used the same mutation frequencies observed in the GBM study, and mutate each gene independently of all other events.

\begin{figure}[ht]
\centering
\includegraphics[scale=0.40]{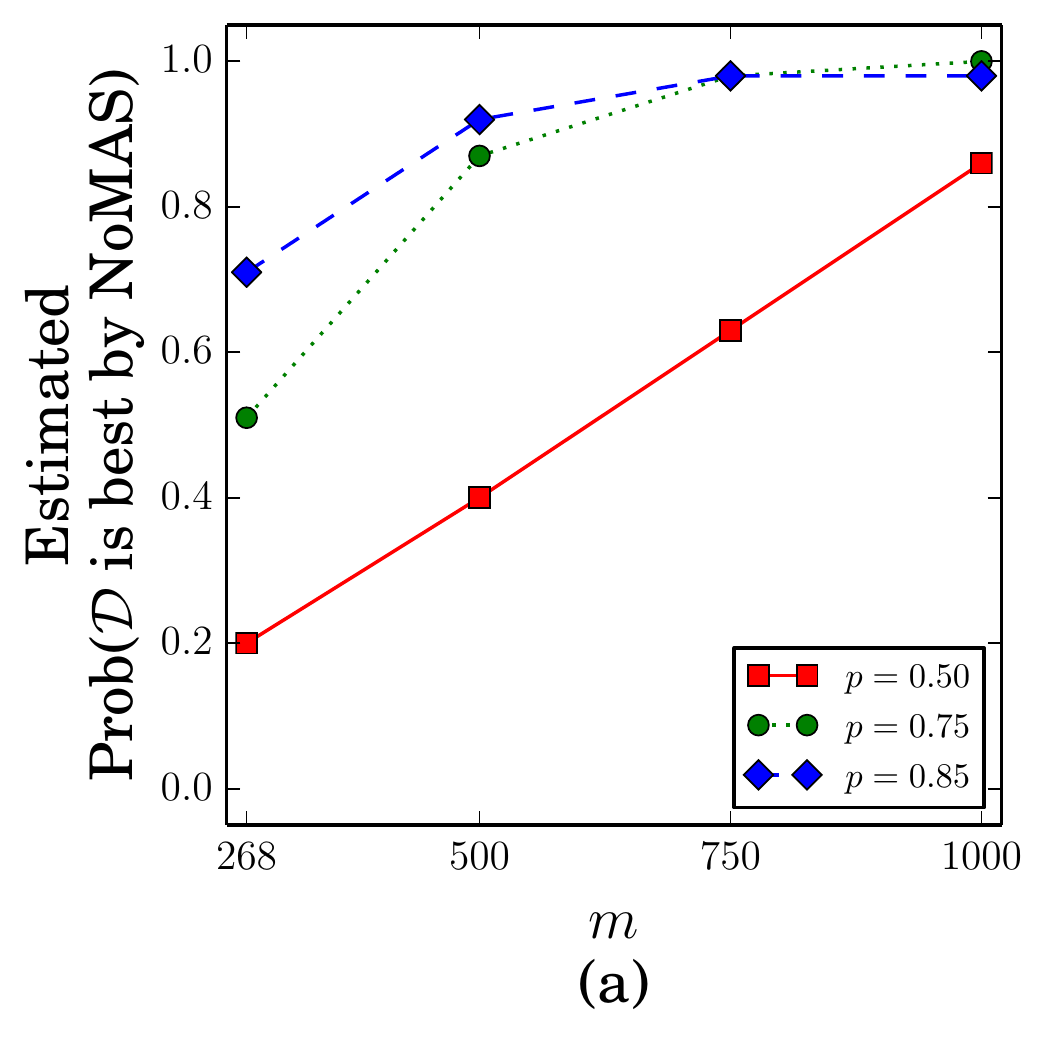}
\includegraphics[scale=0.40]{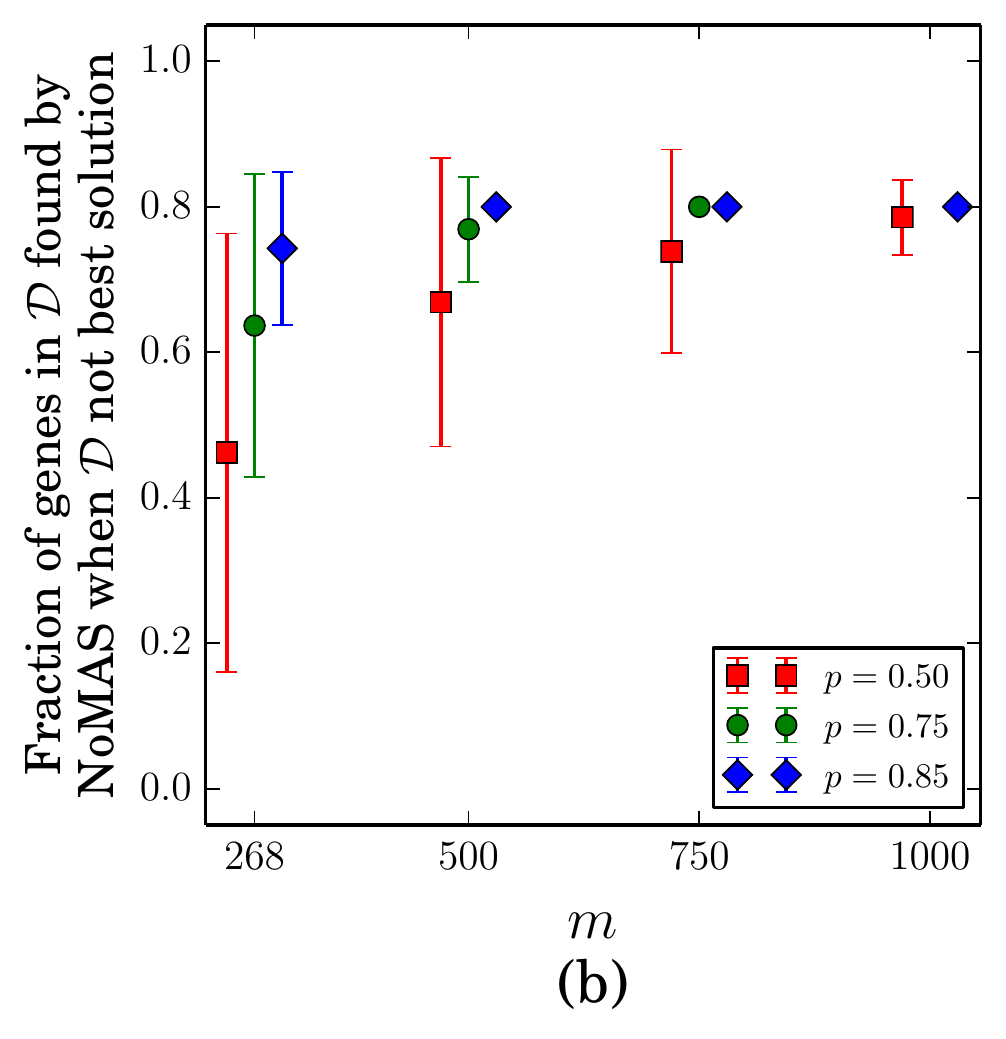}
\includegraphics[scale=0.40]{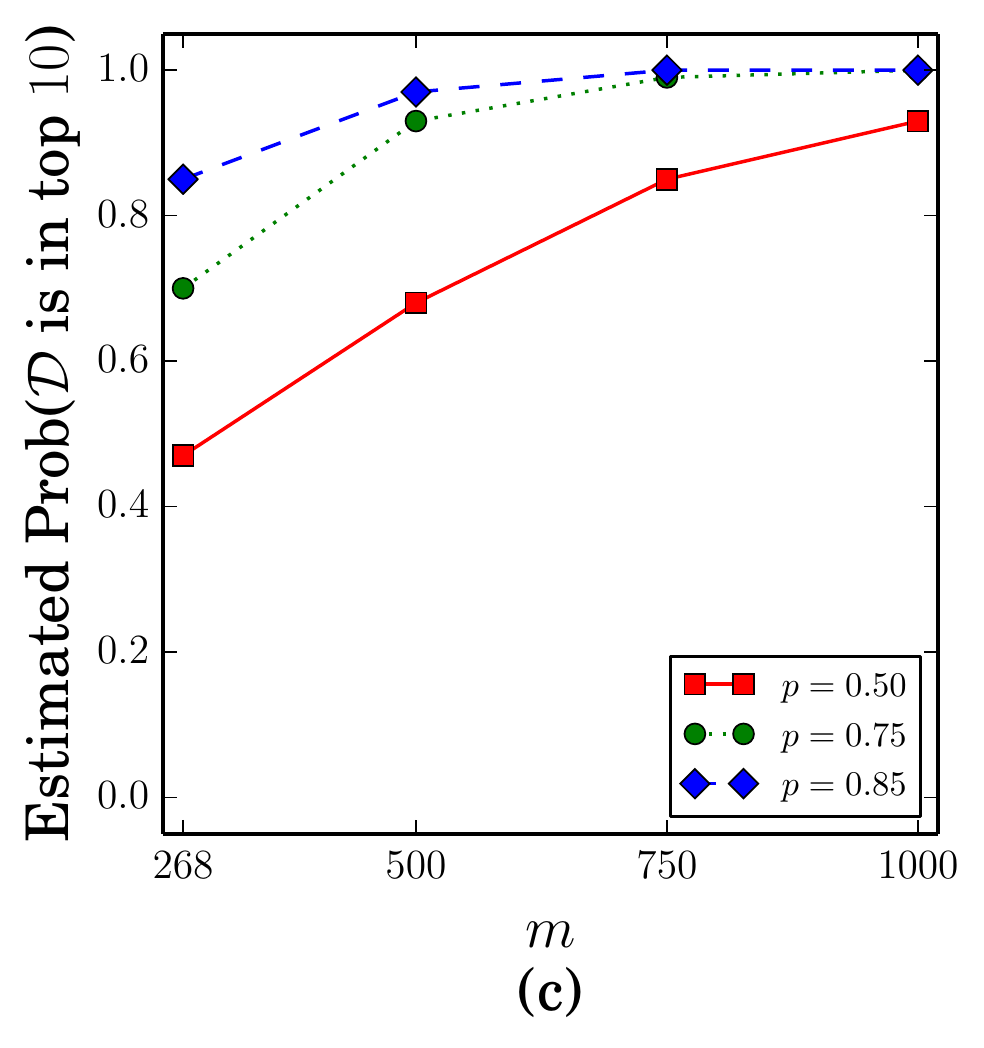}

\caption{Results of NoMAS on simulated data from the Planted Subnetwork Model. $100$ datasets were generated for each pair ($m$,$p$), where $m$  is the number of samples and for different probabilities $p$ of mutations in the set $\Pathway$ of genes associated with survival. (a) Probability that $\mathcal{D}$ is reported as the highest scoring solution by \algname. (b) Ratio of genes from the set $\mathcal{D}$ that are in the best solution when $\Pathway$ is not the highest scoring  solution by \algname. (c) Probability that $\mathcal{D}$ is among the top-10 solutions reported by \algname. All probabilities are estimated from the simulated datasets. }
\label{fig:sim1}
\end{figure}

We fixed $k=5$ and considered the values of $p \in \{0.5, 0.75, 0.85\}$ and $m \in\{268, 500, 750, 1000\}$. We kept the same ratio of censored observations as in GBM and chose the censored samples uniformly among all samples. For every pair  $(p,m)$ we performed $100$ simulations, running \algname\ on the dataset with $L=256$ color-coding iterations, and recorded whether \algname\ reported $\Pathway$ as the highest scoring subnetwork. Results are shown in Fig.~\ref{fig:sim1} (a).
For sample sizes similar to the currently available ones, \algname\ frequently reports $\Pathway$ as the highest scoring solutions when there is a quite strong association of $\Pathway$ with  survival ($p\ge 0.85$), but for $m=1000$ the highest scoring subnetwork reported by \algname\ is $\Pathway$ in $> 80\%$ of the cases even for $p=0.5$. Fig.~\ref{fig:sim1}(b) shows that even when \algname\ does not report $\Pathway$ as the highest scoring solution, the solution reported by \algname\ contains mostly genes that are in $\Pathway$, even for current sample size (e.g., on average $74\%$ of the genes in the $\Pathway$ are reported  by \algname\ for $m=268$ and $p=0.85$ even when $\Pathway$ is not the highest scoring solution by \algname). 
Finally, we assessed whether $\Pathway$ would be among the highest scoring solutions in the table $W$ computed by \algname: Fig~\ref{fig:sim1}(c) shows that by considering the top-10 solutions $W$ the chances to identify $\Pathway$ increase substantially even for $m=268$ and $p=0.5$, with most configurations having $>0.8$ probability of finding $\Pathway$ in the top-10 solutions by \algname. For a fixed $p=0.75$ and for each value of $m$ we assessed whether \algname\ identified the optimal solution even when it was not $\Pathway$ (an event not excluded in the Planted subnetwork Model) and found that for $m \geq 500$ \algname\ reported the optimal solution in $10$ out of $10$ cases (for $m=268$ \algname\ identified the optimal solution $9$ out of $10$ times).
These results show that \algname\ does indeed find the optimal solution in almost all cases even for sample sizes currently available (while the theoretical analysis of Section~\ref{sec:analysis} suggests that much larger sample sizes are required) and it can be used to identify $\Pathway$ or the majority of it by considering the top-10 highest scoring solutions.

The performance of \algname\ is affected when altering the ratio of samples that, with probability $p$, are mutated in $\mathcal{D}$. A higher ratio results in an increased performance of NoMAS, i.e. more cases arise in which $\mathcal{D}$ is the best subnetwork identified, and equally is $\mathcal{D}$ more frequently in the top $10$. When increasing the ratio (for example $\frac{m}{3}$ in stead of $\frac{m}{4}$) more samples are being mutated, and each gene in $\mathcal{D}$ receives more mutations (the gene responsible for each of the mutated samples is chosen uniformly at random). The increment in the association to survival is thus increased each time one of the genes is added to $\mathcal{D}$.

\subsection{Cancer data}
\label{sec:res_cancer}

We assessed the performance of \algname\ on the  GBM, OV, and LUSC datasets. We first assessed whether \algname\ identified the optimal solution by comparing the highest scoring solution reported by \algname\ with the one identified by using the exhaustive algorithm for $k=2,3,4,5$. In all cases we found that \algname\ does identify the optimal solution, while requiring much less running time compared to the exhaustive algorithm (Supplementary Fig.~\ref{fig:runtime}). For $k>5$ we could not run the exhaustive algorithm, while the runtime of \algname\ is still reasonable. The runtime of \algname\ can be greatly improved by using the parallelization strategy described in Section~\ref{sec:alg} (Supplementary Fig.~\ref{fig:parallel}). We therefore used \algname\ to find subnetworks of size $k=6$ and $k=8$.
We also considered two modifications of \algname\ that solve some easy cases where \algname\ may not identify the highest scoring solution due to its subnetwork merging strategy (see Appendix for a description and pseudo code of the modifications).
We run both modifications on GBM, OV, and LUSC for $k=6,8$ (using the same colorings used by the original version of \algname): in all cases the modified versions of \algname\ did not report subnetworks with higher scores than the ones from the original version of \algname.  We also note that the original version of \algname\ is significantly faster in practice than its two modifications (Supplementary Fig~\ref{fig:parallel}), and therefore we used the original version of \algname\ in the remaining experiments.

\newcommand{\naive}{\texttt{Greedy1}}
\newcommand{\fss}{\texttt{GreedyK}}
\newcommand{\dfs}{\texttt{GreedyDFS}}

We also compared \algname\ with three different greedy strategies for the \problemgraphname\ problem. All three algorithms build solutions starting from each node $u \in G$ in iterations by adding nodes to the current solution $\mathcal{S}$, and differ in the way they enlarge the current subnetwork $\mathcal{S}$ of size $1\leq i < k$. (See Appendix for a description of the greedy algorithms).
We run the three greedy algorithms on GBM, OV, and LUSC for $k=4,5,6,8$.
For each dataset we compared the resulting subnetworks with the ones identified by \algname. Results are shown in Fig.~\ref{fig:comp}.
In almost all cases we found that \algname\ discovered subnetworks with higher score than the subnetworks found by using greedy strategies, even if in some cases there is a greedy strategy that identifies the same subnetworks for all values of $k$. 
The difference in score increases as $k$ increases, showing the ability of \algname\ to discover better solutions for larger values of $k$ (see Supplementary Fig. \ref{fig:nomasGreedyTimes} for a running time comparison between NoMAS and the greedy algorithms). We also assessed whether the fact that greedy strategies discover lower scoring solutions than \algname\ has an impact on the estimate of the $p$-value in the permutational test. We considered the top-10 scoring solutions (corresponding  to 10 different starting nodes $u \in G$) discovered by the best greedy stategy in the GBM dataset, and compute the permutational $p$-value for each solution by generating 100 permuted datasets and either use the (same) greedy strategy for permuted data or use \algname\ for permuted data (using only $32$ iterations on the permuted data) 
Supplementary Fig.~\ref{fig:GBMpval} shows a comparison of the distribution of the $p$-values. As we can see, the greedy strategy incorrectly underestimate the permutational $p$-values for the solutions, due to the greedy algorithm not being able to identify solutions of score as high as \algname\ in the permuted datasets. The use of the greedy algorithms would then lead to both \begin{inparaenum} \item identify solutions in real data with lower association to survival compared to \algname\ and \item wrongly estimate their permutational $p$-value as more significant than it is.\end{inparaenum}

Finally, we compared \algname\ with the use of an (additive) score that sums single gene scores (similar to the ones used in~\cite{Vandin:2012aa}.  For each gene $g \in G$ we computed the $p$-value $p(g)$ for the association of $g$ with survival using the log-rank test, and define $a(\Set)= \sum_{g \in \Set} - \log_{10} p(g)$. We then partitioned the genes according to their association with increased survival or with decreased survival , and modified our algorithm to look for high scoring solutions in a partition using score $a(\Set)$. Results are in Fig~\ref{fig:comp}. We found that \algname\ outperforms the use of a single gene score, with a very large difference for certain values of the parameters.

\begin{figure}[ht]
\centering
\includegraphics[scale=0.44]{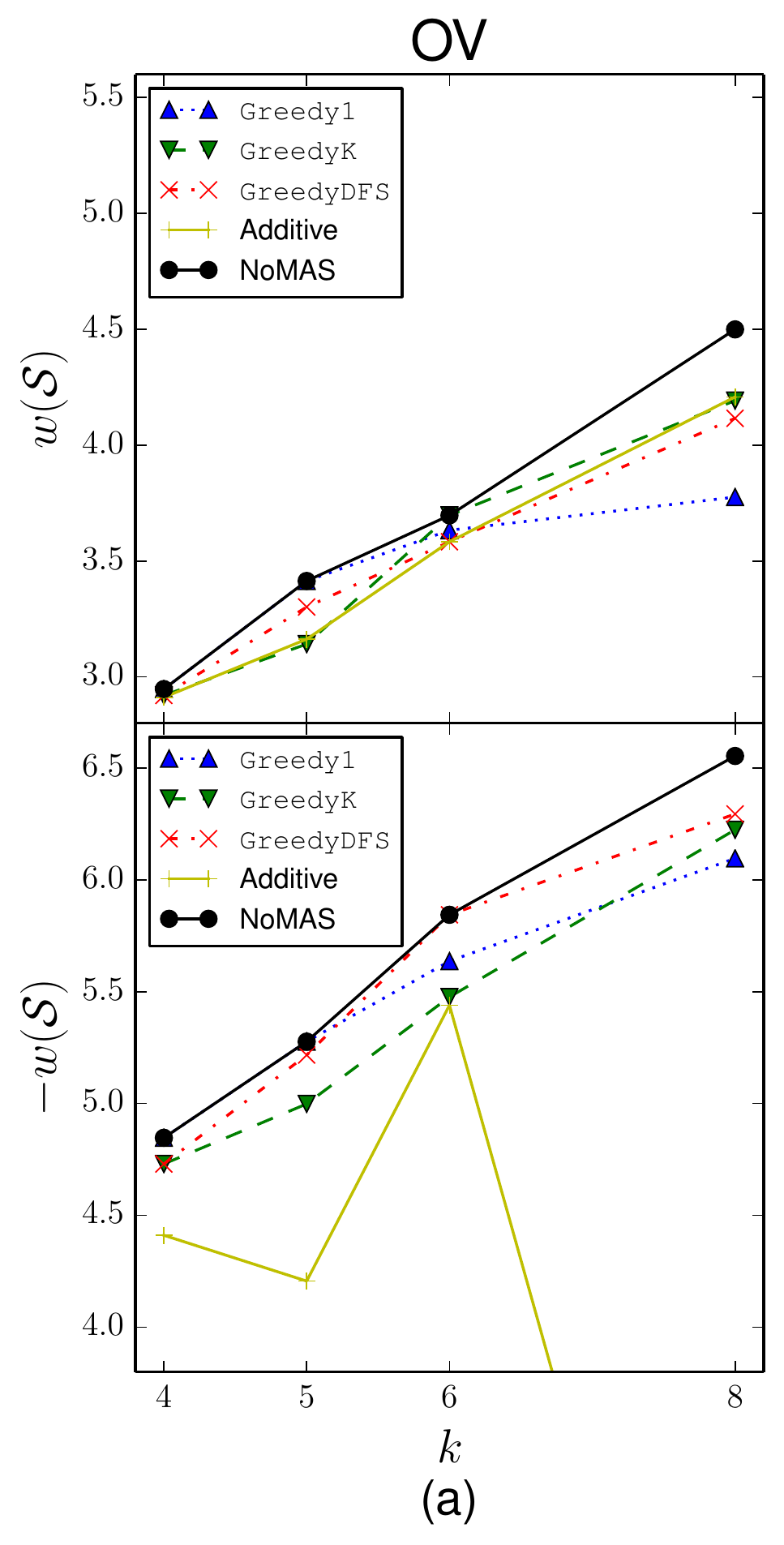}
\includegraphics[scale=0.44]{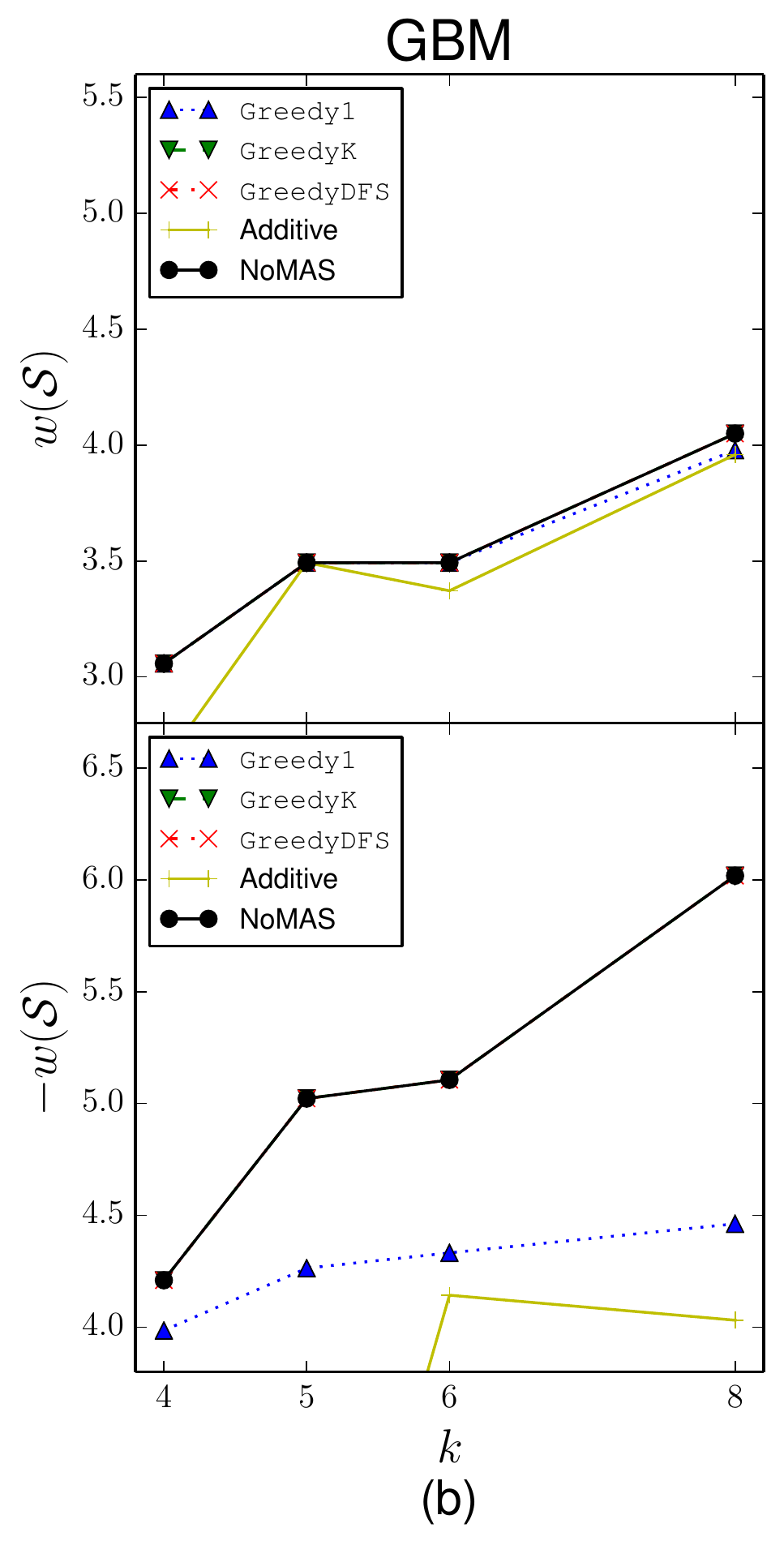}
\includegraphics[scale=0.44]{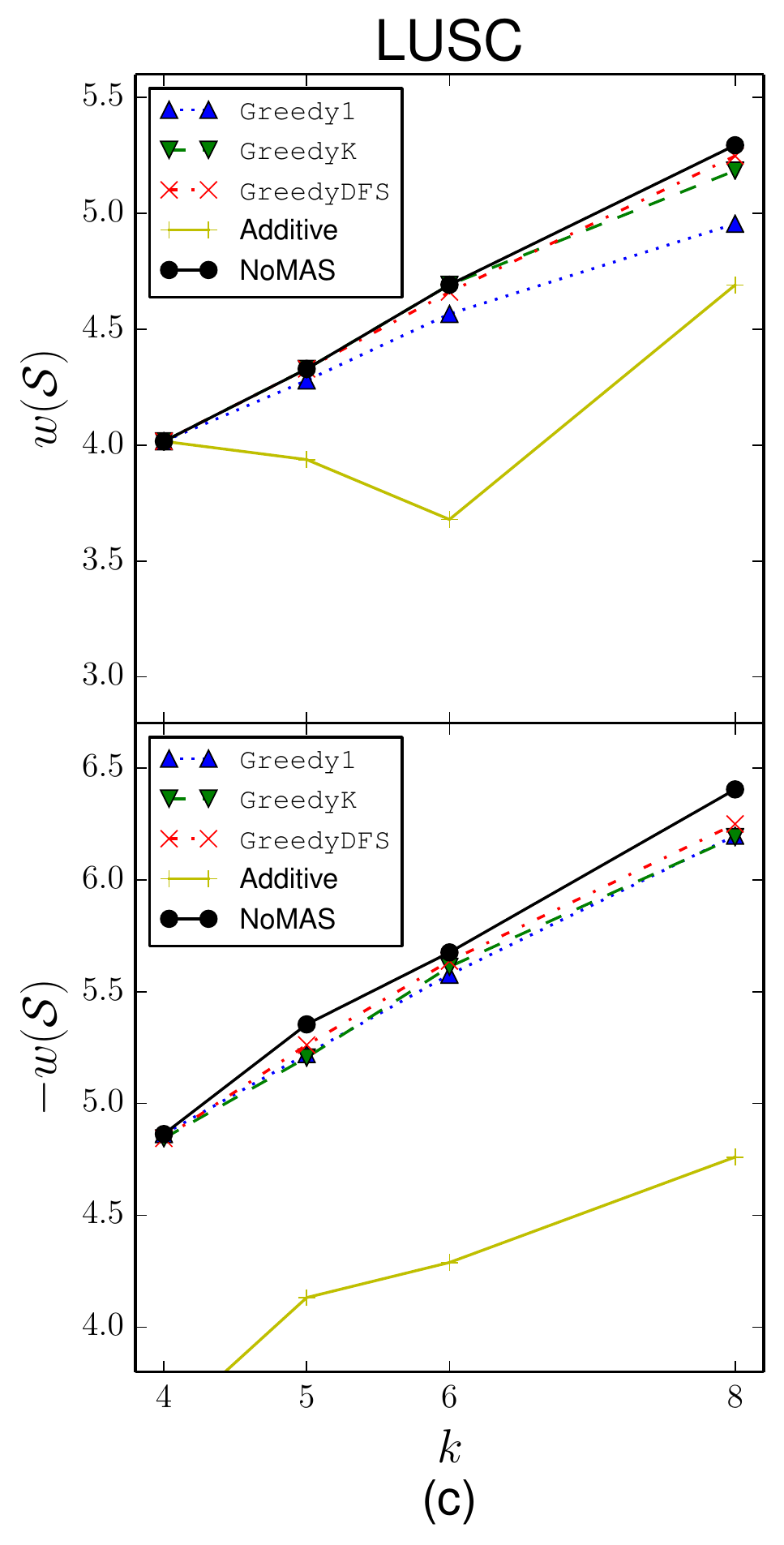}

\caption{Comparison of the normalized log-rank statistic of the best solution reported by \algname, by greedy algorithms (see Appendix for the description), and by the algorithm that uses an additive scoring function $a(\Set)$ (denote by ``additive" in the plots). To maintain readability we omit values above $-4.0$ when considering mutations associated with increased survival. For each datasets the results for the maximization of $w(\Set)$ (top panel) and the maximization of $-w(\Set)$ (bottom panel) are shown separately. (a) Results for GBM dataset. (b) Results for OV dataset. (c) Results for LUSC dataset.}
\label{fig:comp}
\end{figure}

We then considered the top-10 highest scoring subnetworks obtained from \algname\ on GBM and OV for $k=8$. For each subnetwork $\mathcal{S}$ we estimated the log-rank $p$-value and the permutational $p$-value as described in Section~\ref{sec:alg}. These subnetworks do not contain any \emph{gene} that would be reported as significant by single gene tests (corrected $p$=1), but they all show a high association with survival: in GBM, all subnetwork have log-rank $p$-value $<2 \times 10^{-6}$; in OV all subnetwork have $p$-value $<5\times 10^{-7}$. 

\begin{figure}[ht]
\centering
\includegraphics[scale=0.43]{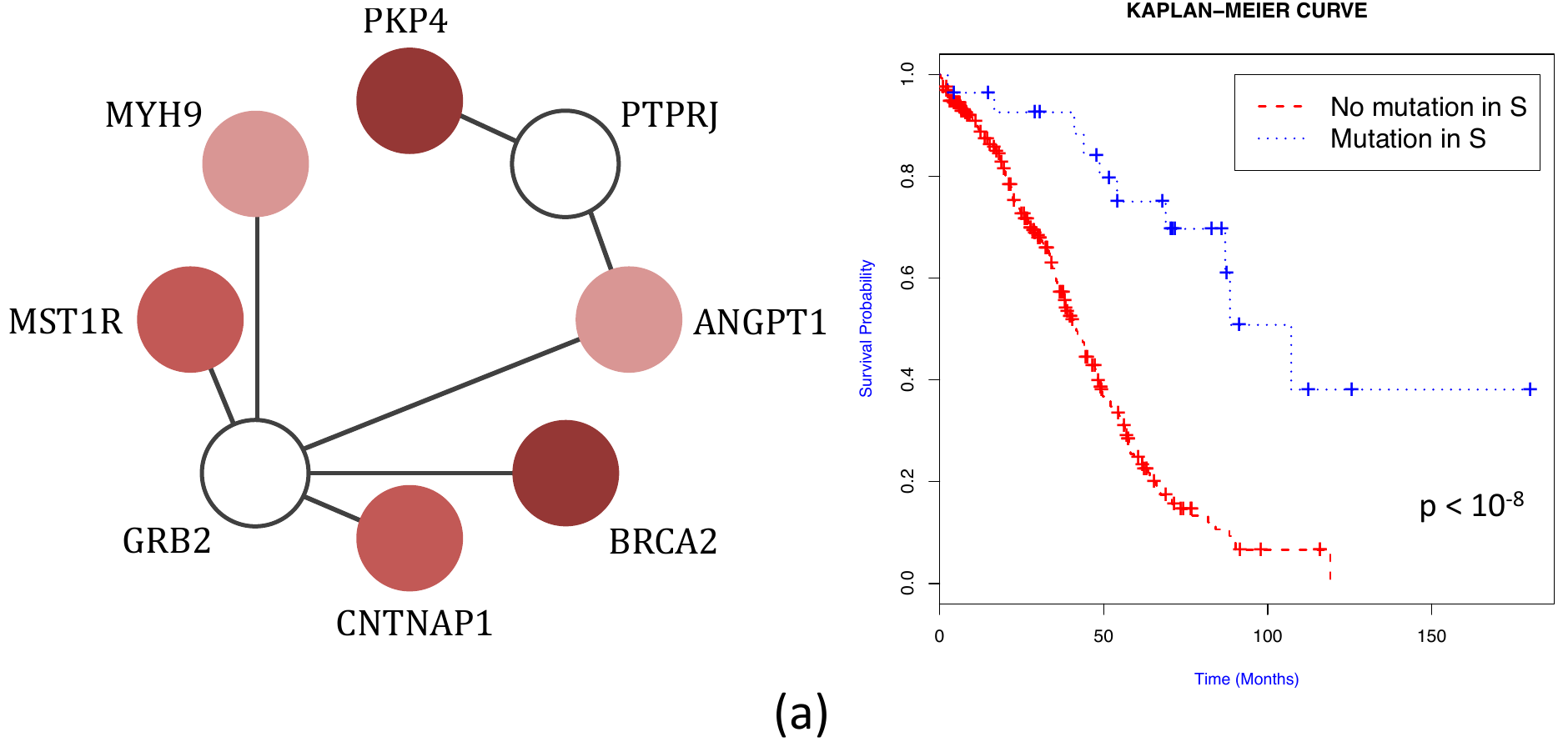}
\includegraphics[scale=0.43]{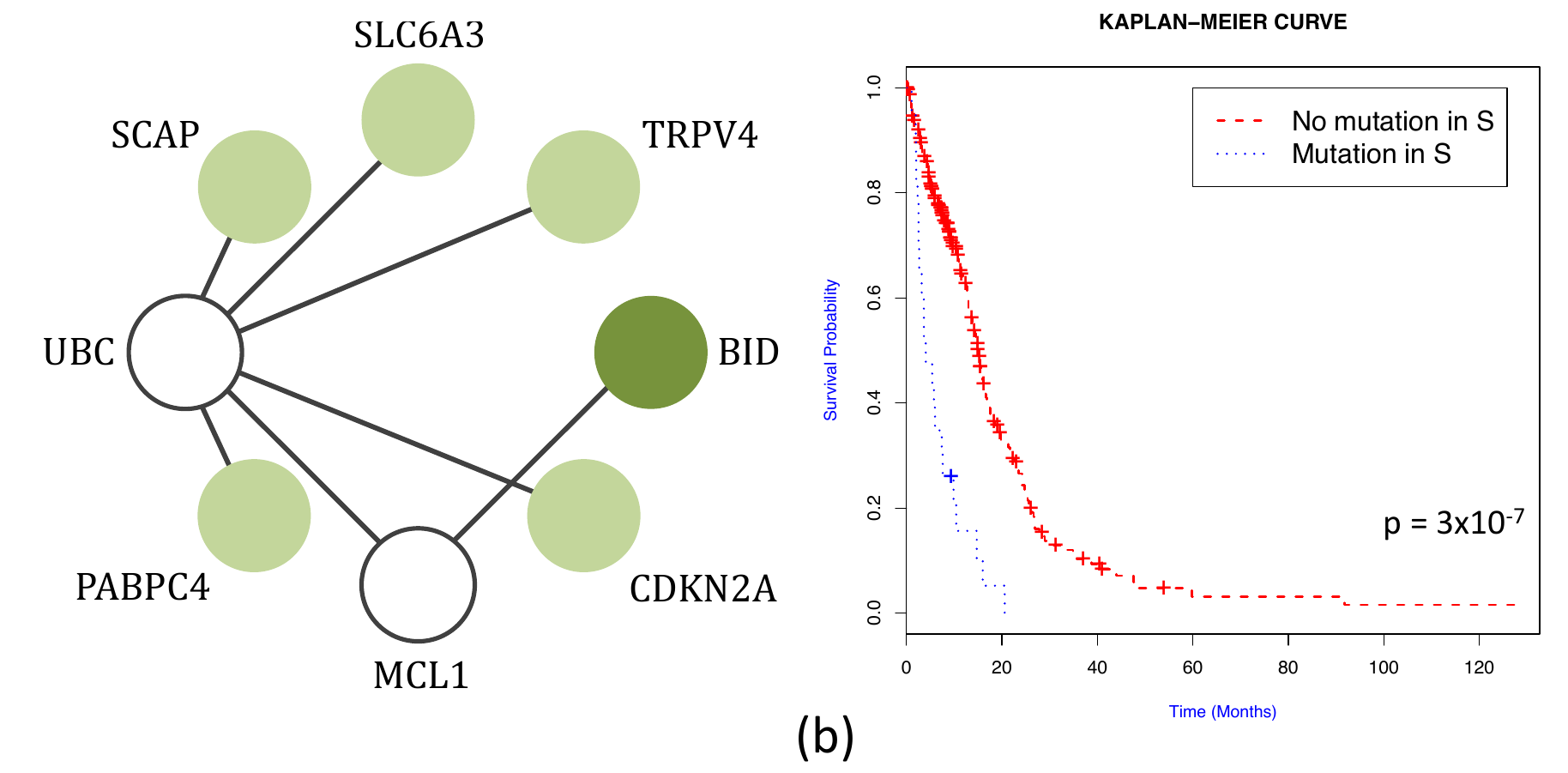}

\caption{Two subnetworks identified by \algname\ on real cancer data. (a) Subnetwork $S$ associated with survival in GBM, and Kaplan-Meier plot for the samples with mutations in $S$ vs samples with no mutation in $S$. Nodes in network are color coded according to their contribution to the normalized log-rank statistic of the network (dark color = higher contribution). White nodes have no mutations. The $p$-value from the log-rank test is shown. (b) Subnetwork $S$ associated with survival in OV, and Kaplan-Meier plot for the samples with mutations in $S$ vs samples with no mutation in $S$. Color-coding as in (a).}
\label{fig:OVandGBM}
\end{figure}
For example, in OV we identify a subnetwork (Fig~\ref{fig:OVandGBM}a) of 8 genes associated with increased survival including BRCA2, a known cancer gene  previously reported to have mutations associated with improved survival~\cite{Yang:2011aa} in ovarian cancer, MSTR1, previously reported as a novel prognostic marker in gastroesophageal adenocarcinoma~\cite{Catenacci:2011aa}, and MYH9, associated to metastasis and tumor invasion in gastric cancer~\cite{Liang:2011aa}. The permutational $p$-value for this subnetwork is $0.07$, and therefore it is unlikely this association is due to random variation. In GBM, we identify a subnetwork (Fig~\ref{fig:OVandGBM}b; permutational $p$-value $=0.2$) of 8 genes associated to decreased survival, including CDKN2A, an important component of the Rb pathway in GBM~\cite{Brennan:2013aa}, BID, a known cell death regulator, SCAP, involved in the molecular mechanisms of lipid metabolism in gliomas~\cite{Guo:2013aa}, TRPV4, previously identified as part of the cell migration mechanism~\cite{Fiorio-Pla:2013aa}, CARD6, involved in the antiapoptotic pathway and previously found to be associated to drug sensitivity in human glioblastoma multiforme cell lines~\cite{Halatsch:2009aa}, and IRS1, part of the known PI3K/PTEN cancer pathway~\cite{Parsons:2008aa}. These  results show that \algname\ identifies subnetworks associated with survival data in GBM and OV including known cancer genes and genes previously reported to be associated with survival, as well as genes that represent novel candidates for the association with survival.

\section{Conclusion}

In this work we study the problem of identifying subnetworks of a large gene-gene interaction network that are associated with survival using mutations from large cancer genomic studies. We formally define the associated
computational problem, that we call the \problemgraphname\ problem, by using as score for a subnetwork the test statistic of the log-rank test, one of the most widely used statistical tests to assess the significance in the difference in
survival among two populations. We prove 
that the \problemgraphname\ problem is NP-hard in general, and is NP-hard even when restricted to graphs with at least one node of large degree. We develop a new algorithm, \algname, based on the color-coding
technique, to efficiently identify high-scoring subnetworks associated with survival. We prove that even if our algorithm is not guaranteed to identify the optimal solution with the probability given by the color-coding technique (due the
non additivity of our scoring function), it does identify the optimal solution with the same guarantees given by the color-coding technique when the data comes from a reasonable model for mutations and independently of the survival
data. Using simulated data, we show that \algname\ is more efficient than the exhaustive algorithm while still identifying the optimal solution, and that our algorithm will identify subnetworks associated with survival when sample sizes larger than most currently available ones, but still reasonable, are available.

We use cancer data from three cancer studies from TCGA to compare \algname\ to approaches based on single gene scores and to greedy methods similar to ones proposed in the literature for the identification of 
subnetworks associated with survival and for other problems
on graphs. Our results show that \algname\ identify subnetworks with stronger association to survival compared to other approaches, and allows the correct estimation of $p$-values using a permutation test. Moreover,
in two datasets \algname\ identifies two subnetworks associated with survival containing genes previously reported to be important for prognosis in the same cancer type as well as novel genes, while no gene is significantly 
associated with survival when considered in isolation.

There are many directions in which this work can be extended. First, we only considered single nucleotide variants and indels in our analysis; we plan to extend our method to consider more complex variants (e.g., copy number aberrations and differential methylation) in the analysis.
Second, we believe that our algorithm and its analysis could be extended to the identification of subnetworks associated with clinical parameters other than survival time and to case-control studies, but substantial modifications to the algorithm
and to its analysis will be required. Third, this work considers the log-rank statistic as a measure of association with survival; another popular test in survival analysis is the use of Cox's regression model~\cite{Kalbfleisch:2002fk}. The two
tests are identical in the case of two populations, therefore our algorithm identifies subnetworks with high score w.r.t. Cox's regression model as well. However, Cox's regression model allows for the correction for covariates (e.g., gender,
age, etc.) in the analysis of survival data. A similar approach could be obtained by stratifying the patients in the log-rank test, but how to efficiently identify subnetworks, and in general combinations of genomic features,
 associated with survival while correcting for covariates remains a challenging open problem.
 
Finally, in this work we have restricted \algname\ to look for subnetworks of size at most $8$, due to the fact that \algname\ is exponential in $k$, and that the running time for identifying larger subnetworks with low error probabilities therefore is expensive (albeit significantly faster than an exhaustive enumeration).
However, we see that subnetworks of size $k$ reported by \algname\ are very likely to contain genes that overlap with reported subnetworks of size $< k$. Thus, solving a smaller problem, which is cheaper to compute, might provide us with parts of the solution to a bigger problem, i.e. indicate vertices of the gene-gene interaction network that are of interest when searching for larger subnetworks.
We can define a local search space $V^\prime \subset V$ around such interesting \textit{seed} vertices, for example as consisting of all the vertices reachable by at most $t$ edges from any seed vertex, for some parameter $t$ specifying the size/diversity of the search space.
By adjusting $t$ (as well as the number of seed vertices), this local search approach within a reduced vertex set will allow \algname\ to more efficiently find subnetworks of the current maximum size, but also to look for even larger subnetworks. The greedy algorithms are very fast and might therefore be good candidates for finding seed vertices. However, other methods for finding interesting areas of the gene-gene interaction network, such as using (or even combining) high scoring subnetworks identified by \algname\ for smaller problem sizes, can also be explored.

\section{Acknowledgements}
This work is supported, in part,  by
MIUR of Italy under project AMANDA
and by  NSF grant IIS-1247581.
The results presented in this manuscript are in whole or part based
upon data generated by the TCGA Research Network: \texttt{http://cancergenome.nih.gov/}.
This paper was selected for oral presentation at RECOMB 2016 and an abstract is published in the conference proceedings

\newpage
\bibliographystyle{plain}
\bibliography{survival_net}  

\newpage
\appendix
{\huge Appendix}

\section{Proof Sketches}

Note that in Equation~\ref{eq:nlr}, the $j$-th term in the sum depends on the values of $x_1,\dots,x_{j-1}$. Our results use the fact that $V(\mathbf{x})$ can be computed as a sum of weights, one for each sample, that are independent of the value of the entries of $\mathbf{x}$.

\setcounter{proposition}{0}

\begin{proposition}
\label{prop:weightedlogrank}
For $i=1,2,\dots, m$, let $w_i = c_i  - \sum_{j=1}^{i}\frac{c_j}{m-j+1}$. Then
\begin{equation*}
\sum_{j=1}^m c_j \left(x_j -  \frac{m_1 -\sum_{i=1}^{j-1} x_i}{m-j+1}\right) = \sum_{j=1}^m w_j x_j.
\end{equation*}
\end{proposition}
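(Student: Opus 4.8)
The plan is a direct algebraic manipulation: expand the left-hand side, rewrite the partial sum $m_1 - \sum_{i=1}^{j-1} x_i$ as a tail sum, interchange the order of the resulting double summation, and recollect terms grouped by the index of $x$.

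First I would split the left-hand side into the two pieces $\sum_{j=1}^m c_j x_j$ and $\sum_{j=1}^m \frac{c_j}{m-j+1}\bigl(m_1 - \sum_{i=1}^{j-1} x_i\bigr)$. The crucial observation is that $m_1 = \sum_{i=1}^m x_i$, so $m_1 - \sum_{i=1}^{j-1} x_i = \sum_{i=j}^m x_i$. Substituting this, the second piece becomes $\sum_{j=1}^m \frac{c_j}{m-j+1} \sum_{i=j}^m x_i$, a sum over all index pairs $(i,j)$ with $1 \le j \le i \le m$.

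Next I would swap the order of summation over this triangular index set, rewriting it as $\sum_{i=1}^m x_i \sum_{j=1}^i \frac{c_j}{m-j+1}$. Subtracting this from the first piece $\sum_{i=1}^m c_i x_i$ and factoring out $x_i$ gives $\sum_{i=1}^m x_i\bigl(c_i - \sum_{j=1}^i \frac{c_j}{m-j+1}\bigr) = \sum_{i=1}^m w_i x_i$ by the definition of $w_i$, which is exactly the claimed identity.

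I do not expect any genuine obstacle here; the only step requiring minor care is the index bookkeeping when converting the partial sum to a tail sum and when exchanging the order of summation (ensuring the triangular region $\{(i,j): j \le i\}$ is described consistently from both directions). Everything else is routine rearrangement, and no asymptotics or probabilistic reasoning is involved.
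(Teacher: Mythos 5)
Your proposal is correct and follows essentially the same argument as the paper's proof: the paper simply runs the identical chain of manipulations in the reverse direction, starting from $\sum_j w_j x_j$, swapping the order of summation over the triangular region, and using $\sum_{i=j}^m x_i = m_1 - \sum_{i=1}^{j-1} x_i$ to recover $V(\mathbf{x})$. The two key steps you identify (the tail-sum rewriting and the interchange of summation) are exactly the ones the paper uses.
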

\begin{proof}[Proof (Sketch)]
\begin{eqnarray*}
\sum_{j=1}^m w_j x_j & = & \sum_{j=1}^{m} \left( c_j - \sum_{i=1}^j \frac{c_i}{m-i+1} \right) x_j\\
& = & \left( \sum_{j=1}^m c_j x_j\right) - \left( \sum_{j=1}^m x_j \sum_{i=1}^j \frac{c_i}{m-i+1}\right)\\
& = & \left( \sum_{j=1}^m c_j x_j\right) - \left(\sum_{j:c_j \neq 0} \frac{\sum_{i=j}^m x_i}{m-j+1} \right)\\
& = & \left( \sum_{j=1}^m c_j x_j\right) - \left(\sum_{j=1}^m c_i \frac{m_1 - \sum_{i=1}^{j-1} x_i}{m-j+1} \right)\\
& = & \sum_{j=1}^m c_j \left(x_j -  \frac{m_1 -\sum_{i=1}^{j-1} x_i}{m-j+1}\right)  = V(\mathbf{x}).
\end{eqnarray*}
\end{proof}

\setcounter{theorem}{0}

\begin{theorem}
The \problemname\  problem is NP-hard.
\end{theorem}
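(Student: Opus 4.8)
The plan is to give a polynomial-time reduction from the (NP-complete) decision version of \textsc{Maximum Coverage}: given a ground set $U=\{u_1,\dots,u_N\}$, sets $S_1,\dots,S_n\subseteq U$ (without loss of generality all non-empty, with $n\ge k$), and integers $k,p$, decide whether some $k$ of the sets cover at least $p$ elements. The starting observation is Proposition~\ref{prop:weightedlogrank}: writing $w_j=c_j-\sum_{i=1}^j\frac{c_i}{m-i+1}$, the unnormalized log-rank statistic of a gene set $\Set$ equals $\sum_{j\in C}w_j$, where $C=P^{\Set}_1$ is the set of patients covered by $\Set$. Hence $w(\Set)=\frac{\sum_{j\in C}w_j}{\sqrt{|C|(m-|C|)}}$ depends on $\Set$ only through $C$, so if the weights $w_j$ of all coverable patients could be made (nearly) equal and positive, maximizing $w(\Set)$ would reduce to maximizing $|C|$, i.e.\ to \textsc{Maximum Coverage}.

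To engineer this I would take $m=N+N^3$ patients: the first $N$ ``real'' patients uncensored ($c_j=1$) and the remaining $N^3$ ``dummy'' patients censored ($c_j=0$). For each set $S_i$ create a gene $g_i$ mutated exactly in the real patients $\{\,j\le N:u_j\in S_i\,\}$ and in no dummy patient, and keep the same $k$. Then for every real patient $j\le N$ one has $w_j=1-\sum_{i=1}^j\frac1{m-i+1}$, which lies in $[\,1-\frac{N}{m-N+1},\,1\,)\subseteq[\,1-\frac1{N^2},\,1\,)$ by the choice of padding, while the dummy patients are never covered, so their (negative) weights never enter any score. Consequently every size-$k$ gene set $\Set$ with covered set $C\subseteq\{1,\dots,N\}$ satisfies $(1-\frac1{N^2})\,f(|C|)\le w(\Set)<f(|C|)$, where $f(c):=\sqrt{c/(m-c)}$ is strictly increasing for $1\le c\le N$ (note $1\le|C|\le N<m$, so the score is well defined).

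Finally I would set the threshold $\tau:=(1-\frac1{N^2})\,f(p)$ and ask whether the \problemname\ instance admits a $\Set$ of size $k$ with $w(\Set)\ge\tau$. If some $k$ sets cover at least $p$ elements, the corresponding $\Set$ has $|C|\ge p$, hence $w(\Set)\ge(1-\frac1{N^2})f(p)=\tau$. Conversely, if no $k$ sets cover more than $p-1$ elements, then every size-$k$ $\Set$ has $|C|\le p-1$, so $w(\Set)<f(p-1)$; and the elementary estimate $\frac{f(p-1)^2}{f(p)^2}=\frac{(p-1)(m-p)}{p(m-p+1)}\le1-\frac1p\le1-\frac1N$, hence $\frac{f(p-1)}{f(p)}\le\sqrt{1-\frac1N}\le1-\frac1{2N}\le1-\frac1{N^2}$ for $N\ge2$, yields $f(p-1)\le\tau$ and therefore $w(\Set)<\tau$ for all $\Set$. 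This gives the equivalence, and since the constructed instance has polynomial size, \problemname\ is NP-hard. The only delicate point, and the main obstacle, is that the $w_j$'s are not free parameters: they are forced by the censoring vector and are genuinely non-uniform, so one must show that this unavoidable perturbation of the weights is strictly smaller than the ``gap'' $f(p)-f(p-1)$ between consecutive coverage levels, which is exactly what dictates the $\Theta(N^3)$ amount of padding and the precise choice of $\tau$.
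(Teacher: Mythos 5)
Your reduction is correct, and it shares the paper's basic gadget --- genes encode sets, the coverable elements become the earliest, uncensored patients, and a block of censored, never-mutated patients is appended so that the relevant weights $w_j$ are positive --- but it closes the argument by a genuinely different route. The paper reduces from \textsc{Set Cover} with only a $3N$-patient censored pad and must then prove two structural facts about the normalized score: that for fixed coverage size the top-weighted patients are best (the $w_j$ are decreasing), and that the partial-sum ratio $\sum_{i\le j}w_i/\sqrt{j(m-j)}$ is monotone increasing in $j$ up to $N$ (its item (iii), asserted ``by induction''), so that the optimum is attained exactly at full coverage. You instead reduce from the decision version of \textsc{Maximum Coverage} and replace that monotonicity lemma by a quantitative padding argument: with $N^3$ censored dummies every coverable patient's weight is pinned into $[1-\tfrac1{N^2},1)$, the score collapses to $(1\pm\tfrac1{N^2})\sqrt{|C|/(m-|C|)}$, and an explicit estimate shows the multiplicative gap between coverage levels $p-1$ and $p$ exceeds the $\tfrac1{N^2}$ perturbation. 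What this buys you is a self-contained, fully quantitative proof that never needs the exact ordering or monotonicity properties of the $w_j$, and it establishes hardness of the sharper threshold question ``is there a size-$k$ set with $|P^{\Set}_1|\ge p$'' for every $p$, not just $p=N$; the cost is a polynomially larger instance and the need to verify the gap estimate $f(p-1)/f(p)\le 1-\tfrac1{N^2}$, which you do correctly (modulo the harmless conventions that all $S_i$ are non-empty so $|C|\ge1$ and the score is defined, and that $p=1$ is trivial). Either argument suffices for the theorem.
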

\begin{proof}[Proof (Sketch)]
The reduction is from the minimum set cover problem. In particular, we will show that if we can find a set $\mathcal{S}$ with $|\mathcal{S}|$ maximizing $w'(\mathcal{S})$ in polynomial time, then we can test (in polynomial time) if there is a set cover of cardinality $k$. This implies that one could find the size of the minimum set cover in polynomial time, that is an NP-hard problem.

In the minimum set cover problem, one is given elements $e_1,\dots, e_n$, where each element $e_i, 1 \le i \le n$ is a subset of a universe set $\mathcal{U}$, with $|\mathcal{U}| = m$. The goal is to find the minimum cardinality subset $\mathcal{C} \subset \{ e_1, \dots, e_n\}$ such that $\cup_{e \in \mathcal{C}} = \mathcal{U}$.

Given an instance of the minimum set cover problem, we build an instance of the \problemname\ as follows. For each element $e_i, 1 \le i \le n$, we have a gene $g_i$, with $\mathcal{G} = \{ g_1, \dots, g_n\}$. The set $\mathcal{P}$ of patients has cardinality $4 |\mathcal{U}|$. $\mathcal{P}$ is partitioned into two sets $\mathcal{P}_1$ and $\mathcal{P}_2$, with $\mathcal{P}_1\cap \mathcal{P}_2 = \emptyset$ and $\mathcal{P} = \mathcal{P}_1 \cup \mathcal{P}_2$. Moreover we have $|\mathcal{P}_1| = \mathcal{U}$ and $|\mathcal{P}_2| = 3 \mathcal{U}$, and the survival time of all patients in $\mathcal{P}_1$ is lower than the survival time of all patients in $\mathcal{P}_2$. In addition, no patient of $\mathcal{P}_1$ is censored, while all patients in $\mathcal{P}_2$ are censored. The patients of $\mathcal{P}_1$ correspond to the elements of $\mathcal{U}$, and gene $g_i$ is mutated in patients $e_i \subset \mathcal{U}$. 

We now show that there is a minimum set cover of cardinality $k$ if and only if $\max_{\mathcal{S} \subset \mathcal{G}, |\mathcal{S}|=k} w'(\mathcal{S})= \frac{\frac{m}{4} - \sum_{j=1}^{m/4} \frac{1}{m-j+1}}{\sqrt{9 m^2/16}}$. In particular, we will show that the maximum log-rank statistic is obtained when $x_i=1$ for all $1 \le i \le \frac{m}{4}$ and $x_i=0$ for all $\frac{m}{4} < i \le n$, that can be achieved if and only if there is a set cover of cardinality $k$.  (Note that $m$ is divisible by $4$ by construction.)

To prove the above, it is enough to show the following:
\begin{enumerate}[i)]
\item $w_i> 0$ for $ 1 \le i \le \frac{m}{4}$;\label{proof:i}
\item for a fixed $m_1 \le \frac{m}{4}$, the maximum weight is given by $\frac{\sum_{i=1}^{m_1} w_i}{\sqrt{m_1(m-m_1)}}$;
\item for all $1\le j \le \frac{m}{4}-1$: $\frac{\sum_{i=1}^{j} w_i}{\sqrt{j(m-j)}} \le \frac{\sum_{i=1}^{j+1} w_i}{\sqrt{(j+1)(m-(j+1))}}$.
\end{enumerate} 

We first  note that $w_i > w_{i+1}$ for $1\le i < \frac{m}{4}$.

\begin{equation*}
 w_i -w_{i+1} =  1  - \sum_{j=1}^{i}\frac{1}{m-j+1} - 1  + \sum_{j=1}^{i+1}\frac{1}{m-j+1}  = \frac{1}{m-i} > 0.
 \end{equation*}
 
To prove \ref{proof:i}) above it is then enough to prove that $w_{\frac{m}{4}} > 0$.  

\begin{equation*}
w_{\frac{m}{4}} = 1  - \sum_{j=1}^{m/4} \frac{1}{m-j+1} = 1  - \sum_{j=\frac{3m}{4}+1}^{m} \frac{1}{j} = 1 - H(m)+H\left(\frac{3m}{4}\right).
\end{equation*}
where $H(m)$ is the $m$-th harmonic number. Since $H(m) \le \ln m + \gamma + \frac{1}{2m}$, with $\gamma \le 0.58$ constant, for $m$ large enough, and $H(m) \ge \ln m$, we have:
\begin{equation}
w_{\frac{m}{4}} \ge 1 - \ln m - \gamma  - \frac{1}{2m} + \ln{\frac{3m}{4}} \ge 1 - \ln \frac{4}{3} - \gamma - \frac{1}{2m}  > 0.1 - \frac{1}{m} > 0
\end{equation}

for $m$ large enough.

ii) follows immediately from i) and from $w_i > w_{i+1}$ for $1\le i < \frac{m}{4}$ (since fixed $m_1$, the denominator $\sqrt{m_1 (m-m_1)}$ is fixed).

iii) can be proved by induction.
\end{proof}

\begin{theorem}
The \problemgraphname\ problem on graphs with at least one node of degree $\BO{n^\frac{1}{c}}$, where $c>1$ is constant, is NP-hard.
\end{theorem}
\begin{proof}[Proof (Sketch)]
Take an instance of set cover with $n$ elements. We can ``encode'' it in the neighbours of a node of degree $n$ in a graph with $n^c$ vertices, where $c>0$ is a constant, using the same scheme used for Theorem~\ref{thm:nphard}. All other vertices have no mutations. Note that the reduction is polynomial.
\end{proof}

\begin{proposition}
For every $k \geq 3$ there is a family of instances of the \problemgraphname\ problem and colorings for which \opt\ is not found by our algorithm even if it is colorful.
\end{proposition}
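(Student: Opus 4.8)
The plan is to exploit the non‑decomposability of the weighted‑coverage score. By Proposition~\ref{prop:weightedlogrank} we may write $w(\Set)=\sum_{j:\,x^\Set_j=1}w_j$, a weighted set‑cover function; unlike an additive score, the highest‑scoring colorful subnetwork with a given colorset need not contain the highest‑scoring colorful piece for a sub‑colorset, because a patient mutated by two genes of $\Set$ is counted only once in $w(\Set)$. I will build an instance in which the optimal subnetwork $\Pathway$ is colorful and (strictly) optimal, yet two small entries of the table $W$ are forced to store ``decoy'' subnetworks, which then propagates so that $\Pathway$ is never written into $W$ at all.

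Concretely, for $k=3$ I take $\Pathway=\{d_1,d_2,d_3\}$ to be a path $d_1-d_2-d_3$, add two decoy genes $e,f$ each adjacent only to $d_2$, and fix the coloring $\mathcal{C}(d_i)=i$, $\mathcal{C}(e)=3$, $\mathcal{C}(f)=1$; the only colorful connected $3$‑subnetworks are then $\Pathway$, $\mathcal A=\{d_1,d_2,e\}$, $\mathcal B=\{f,d_2,d_3\}$ and $\mathcal F=\{f,d_2,e\}$. Writing $\mu(g)$ for the set of patients mutated in gene $g$ (a row of $M$), I will choose the $\mu(g)$ so that: (i) $w(\{d_2,e\})>w(\{d_2,d_3\})$ and $w(\{d_2,f\})>w(\{d_1,d_2\})$; and (ii) $w(\Pathway)$ strictly exceeds the score of each of the five other connected $3$‑subnetworks. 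A convenient way to meet (i)--(ii): let $\mu(d_2)$ be a large block of positive‑weight (early, uncensored) patients; let $e$ (resp.\ $f$) derive most of its value from patients it shares with $d_1$ (resp.\ $d_3$), so that this value is already paid for once $d_1$ (resp.\ $d_3$) is in the set; and give $d_1,d_3$ a private positive patient together with one common negative‑weight (late, censored) patient. The constraints reduce to a small feasible system of linear inequalities on patient‑weight totals, and any such totals are realizable by partitioning patients among these types since the $w_j$ range over both positive and negative values. (A simpler variant takes $\mu(e)=\mu(d_3)$, $\mu(f)=\mu(d_1)$, which makes $\Pathway$ merely \emph{tied} with $\mathcal A,\mathcal B,\mathcal F$ and needs only that \algname\ resolve the resulting ties toward a decoy; the negative‑weight patient is exactly what upgrades this to a strict unique optimum.)

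Given (i), at colorset $\{2,3\}$ the only nonempty subnetworks reachable through $d_2$ are $\{d_2,d_3\}$ and $\{d_2,e\}$, and at colorset $\{1,2\}$ only $\{d_1,d_2\}$ and $\{d_2,f\}$; by (i) the decoy wins in each case, so $W(\{2,3\},d_2)=\{d_2,e\}$ and $W(\{1,2\},d_2)=\{d_2,f\}$. Since $\Pathway$ is a path it could only be written into $W(\{1,2,3\},d_i)$, and every merge producing it there must read $W(\{2,3\},d_2)=\{d_2,d_3\}$ or $W(\{1,2\},d_2)=\{d_1,d_2\}$ (the two nontrivial edge cuts of the path); neither holds, so $\mathcal{S}^\prime(\{1,2,3\},d_i)$ contains only subnetworks among $\mathcal A,\mathcal B,\mathcal F$, and $\Pathway$ never enters $W$ even though it is colorful and optimal. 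For general $k\ge 3$ I append to $d_2$ an inert path $d_2-h_1-\cdots-h_{k-3}$, give each $h_i$ a fresh color $3+i$ and a large block of positive‑weight patients disjoint from everything else, and set $\Pathway=\{d_1,d_2,d_3,h_1,\dots,h_{k-3}\}$; taking each $w(\mu(h_i))$ larger than the whole gadget mass forces the optimum to contain every $h_i$, so $\Pathway$ is again the unique optimum and is colorful. In $W$, any way of reaching an $h_i$ from a color in $\{1,2,3\}$ must route through $d_2$, so the ``left'' piece of every edge cut of the tree $\Pathway$ inherits the corruption of $W(\{1,2,3\},d_2)$ (corrupted just as above, via the corrupted $W(\{2,3\},d_2)$ and $W(\{1,2\},d_2)$), while the ``right'' pieces live on the $h$‑tail and are harmless; one checks that no edge cut of $\Pathway$ has both pieces stored correctly, hence $\Pathway\notin W$.

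I expect the only genuinely delicate step to be (ii): making the decoys attractive enough to corrupt $W(\{2,3\},d_2)$ and $W(\{1,2\},d_2)$ while keeping $\Pathway$ strictly optimal --- in particular preventing $\mathcal F=\{f,d_2,e\}$, which pools the attractiveness of both decoys, from overtaking $\Pathway$. This is precisely where the shared negative‑weight patient is needed, and verifying that the resulting inequality system is feasible is the one real computation; everything else is bookkeeping about which merges the dynamic program is able to perform.
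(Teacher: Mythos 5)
Your gadget is the same mechanism the paper uses, just dressed differently: the paper takes a star with a mutation‑free hub $v_0$, good leaves $v_1,\dots,v_{k-1}$ carrying weights $\{w_i,w_{k-1+i}\}$, and two decoy leaves $\bar v_1,\bar v_2$ that duplicate the colors of $v_1,v_2$ and carry the two largest weights $\{w_1,w_2\}$, so that every DP merge at the hub prefers a decoy to the good leaf of the same color and $OPT$ is never written into $W$. Your path‑plus‑decoys construction, with $e,f$ duplicating the colors of $d_3,d_1$ and corrupting $W(\{2,3\},d_2)$ and $W(\{1,2\},d_2)$, is the identical corruption‑propagation argument; the case analysis of which merges could produce $\Pathway$ is correct, and the inequality system you defer is in fact feasible at the level of raw weight sums (adding your two conditions in (i) forces $w(z)<-\tfrac{1}{2}(w(a_1)+w(a_3))$, while beating $\mathcal F$ forces $w(z)>-(w(a_1)+w(a_3))$, a nonempty window).

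The one genuine gap is that you analyze the \emph{unnormalized} sum $\sum_{j:x^{\Set}_j=1}w_j$, whereas the score being maximized is $V(\mathbf{x}^{\Set})/\sigma(\mathbf{x}^{\Set})$ with $\sigma(\mathbf{x}^{\Set})=\sqrt{m_1(m-m_1)}$. Your competing merge candidates cover different numbers of patients ($\{d_2,e\}$ versus $\{d_2,d_3\}$, and $\Pathway$ versus $\mathcal F$, differ in $m_1$), so the comparisons are not settled by weight totals and your ``system of linear inequalities'' is not the right system; worse, your fix of giving $d_2$ a large positive block $B$ makes the normalized comparison sensitive to $O(W_B/|B|)$ perturbations, which can flip the intended sign. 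The paper avoids this entirely by arranging that in every head‑to‑head comparison the decoy and the good vertex contribute exactly the same number of \emph{new} mutated patients (two each), so $m_1$ cancels and only the weight sums matter. Your construction can be repaired the same way (equalize $|\mu(e)\setminus\mu(d_2)|$ with $|\mu(d_3)\setminus\mu(d_2)|$ and likewise for $f$ versus $d_1$, and check the $m_1$'s of the size‑$3$ candidates), but as written the decisive inequalities are not established. A secondary, smaller point: the $w_j$ are not free parameters but are determined by the censoring vector $\mathbf c$, so the existence of a sufficiently negative $w(z)$ alongside the required positive weights (late censored patients give $w_j<0$, early uncensored ones give $w_j$ near $1$) should be stated rather than assumed.
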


\begin{proof}[Proof (Sketch)]
Let the number of samples be $n = 8(k-1)$. The censoring information $\mathbf{c}$ is such that $c_i = 1$ for $1 \leq i \leq \frac{n}{4}$ and $c_j = 0$ for $\frac{n}{4}+1\leq j \leq n$. From Theorem \ref{thm:nphard} we get that all weights $w_i > 0$ for $1 \leq i \leq \frac{n}{4}$.
Let $\mathcal{I}$ be a tree with one internal vertex $v_0$ and $k+1$ leaf vertices $\{v_1, v_2, \dots, v_{k-1}, \bar{v_1}, \bar{v_2}\}$. Consider a coloring $\mathcal{C}$ in which $\mathcal{C}(v_i)$ are distinct for $0 \leq i \leq k-1$ and $\mathcal{C}(v_j) = \mathcal{C}(\bar{v_j})$ for $1\leq j \leq 2$.
Let $\sigma(v)$ be the set of weights for vertex $v$, i.e containing a weight for each sample mutated in the gene associated with $v$. Assign the weights such that $\sigma(v_0) = \emptyset$, $\sigma(v_i) = \{w_i, w_{k-1+i}\}$ and $\sigma(\bar{v_i}) = \{w_1, w_2\}$.
Note that for any $k \geq 3$ the optimal connected subnetwork \opt\ $= \mathcal{S} = \{v_0, v_1, \dots, v_{k-1}\}$ since $\sigma(\mathcal{S}) = \{w_1, w_2, \dots, w_{n/4}\}$. By construction \opt\ is colorful.\\

The idea of the construction is to have two \textit{bad} colors. A color $c$ is bad if it is assigned to two vertices. The vertex in \opt\ with color $c$ is a \textit{good} vertex, while the vertex with color $c$ not in \opt\ is a \textit{bad} vertex. In our construction $v_1$ and $v_2$ are good vertices and $\bar{v_1}$ and $\bar{v_2}$ are bad vertices.
Recall that our algorithm combines two subnetworks that are connected by an edge, thus every subnetwork of size $\ell$ must be a combination of a leaf $v_i$ and some subnetwork $W(T, v_0)$ of size $\ell-1$. 
To generate \opt, at some point we will have that $v_i$ is one of the good vertices while $W(T, v_0)$ contains the other good vertex. We will show that this cannot happen. In particular we argue that $W(T, v_0)$ cannot contain only one bad color and be a subset of \opt.
Without loss of generality, assume $v_1$ is the vertex with a bad color in $W(T, v_0)$. Consider the time it is added to $W(T, v_0)$ by combination of some $W(Q, v_0) \setminus \{v_1, v_2\}$ and $W(\{\mathcal{C}(v_1)\}, v_1)$. However, our algorithm will choose to combine with $\bar{v_1}$ in stead of $v_1$ because $\bar{v_1}$ yields the largest increase in the normalized log-rank statistic.
To see this, note that $v_1$ and $\bar{v_1}$ both add two weights to $\sigma(W(Q, v_0))$ that are not already in $\sigma(W(Q, v_0))$. Both options therefore have the same number of mutations, and their normalized log-rank statistic can be compared by simply comparing their log-rank statistic. By construction $\sigma(\bar{v_0})$ contains the two largest weights, hence it yields the larger log-rank statistic.

\end{proof}

\begin{theorem}

For any optimal colorful connected subnetwork $\mathcal{S}$ of size $k \geq 3$ and any algorithm $\mathcal{A}$ which obtains subnetworks with colorsets of cardinality $i$ by combining $2$ subnetworks with colorsets of cardinality $< i$, by adding $3$ neighbors to $\mathcal{S}$ we have that $\mathcal{A}$ may not discover $S$ .
\end{theorem}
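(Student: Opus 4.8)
The plan is to generalize the construction behind Proposition~\ref{thm:badfamily}: there the target was a particular star and two ``shadow'' leaves sufficed to mislead the dynamic program, whereas here the optimal $\mathcal{S}$ is arbitrary, so we pay one extra shadow vertex to compensate for the loss of the common ``center'' vertex the star provided. I would keep the censoring setup used in the proofs of Theorem~\ref{thm:nphard} and Proposition~\ref{thm:badfamily}, so that the first quarter of the patients carry strictly decreasing positive log-rank weights $w_1 > w_2 > \cdots > 0$ and, by Proposition~\ref{prop:weightedlogrank}, $w(\mathcal{T}) = \left(\sum_j w_j x^{\mathcal{T}}_j\right)/\sqrt{m_1(m-m_1)}$ for every subnetwork $\mathcal{T}$. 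Then fix $\mathcal{S}$, color it colorfully with colors $1,\dots,k$ ($s_j$ the vertex of color $j$; here $k\ge 3$ is used), and attach three new vertices $u_1,u_2,u_3$ with colors $1,2,3$. Each $u_\ell$ is wired into $\mathcal{S}$ and given mutations in high-weight patients, while the mutation profile of $\mathcal{S}$ is chosen so that the heavy patients hit by $u_\ell$ are \emph{also} carried by a suitable further vertex $s_{c_\ell}\in\mathcal{S}$ with $c_\ell\ne\ell$. This redundancy makes the swap ``$s_\ell\mapsto u_\ell$'' \emph{context dependent}: inside a partial colorful subnetwork that does not yet contain $s_{c_\ell}$, the swap preserves the number of mutations but strictly increases $\sum_j w_j x_j$ (hence strictly increases the normalized score, the normalization depending only on the mutation count), whereas once $s_{c_\ell}$ is present the swap can only lower the score.

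The proof then has three ingredients. First, the same structural observation that drives Proposition~\ref{thm:badfamily}: if $\mathcal{A}$ ever stores $\mathcal{S}$, then unrolling its merges yields a chain $\{v\}=A_1\subsetneq\dots\subsetneq A_k=\mathcal{S}$ of connected subsets of $\mathcal{S}$ with $\mathcal{A}$'s table entry for the colorset of $A_i$ equal to $A_i$ for every $i$ (an entry of a colorset is built only by merging two strictly smaller colorsets, and entries hold connected colorful subnetworks). Second, since $\mathcal{A}$ keeps the highest-scoring subnetwork in each entry, the argmax argument of Proposition~\ref{thm:badfamily} applies locally: at any chain step $A_i=A_{i-1}\cup\{s_\ell\}$ with $\ell\in\{1,2,3\}$, if $s_{c_\ell}\notin A_i$ and $u_\ell$ is adjacent to $A_{i-1}$, then the merge candidate $(A_i\setminus\{s_\ell\})\cup\{u_\ell\}$ strictly outscores $A_i$, so $\mathcal{A}$'s entry for that colorset is not $A_i$, contradicting the chain. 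Third --- and this is where three shadow vertices are needed while two are not --- one chooses the adjacencies of the $u_\ell$ and the dependency indices $c_1,c_2,c_3$ so that \emph{no} order in which $\mathcal{A}$ may perform its merges avoids all three traps: $\mathcal{A}$'s only relevant freedoms are which vertex it starts from (which exempts at most one of the three colors) and the order in which it introduces the remaining colors, and a cyclic choice of the dependencies $\ell\mapsto c_\ell$ renders the surviving ``$s_{c_\ell}$ before $s_\ell$'' constraints mutually unsatisfiable for every choice of start; with only two shadow vertices $\mathcal{A}$ can always pick a start that simultaneously exempts one color and pre-supplies the redundancy target of the other. Combining the three ingredients, $\mathcal{A}$ never stores $\mathcal{S}$.

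It remains to verify that attaching the gadget does not spoil optimality, since otherwise the hypothesis of the theorem would be vacated. Here the redundancy pays off again: any colorful connected subnetwork of size $k$ that uses some $u_\ell$ must omit $s_\ell$ but, by the wiring, still contains $s_{c_\ell}$, so $u_\ell$'s heavy patients are already covered and its score is strictly below $w(\mathcal{S})$; no other size-$k$ colorful connected subnetwork beats $\mathcal{S}$ by assumption. Hence $\mathcal{S}$ is still the (unique) optimum and is nonetheless never discovered by $\mathcal{A}$.

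I expect the main obstacle to be exactly the tension between the second and third ingredients: the shadow vertices must be attractive enough that the dynamic program prefers them inside \emph{partial} solutions, yet redundant enough that they cannot help at \emph{full} size $k$, and this has to be arranged uniformly over an arbitrary $\mathcal{S}$ whose internal topology (connectivity, cut vertices, degrees) we do not control. Making the ``fresh versus already-covered patient'' bookkeeping exact --- so that each merge comparison genuinely reduces to comparing weight sums with identical normalization $\sqrt{m_1(m-m_1)}$ on the two sides --- and proving that the cyclic dependency among $c_1,c_2,c_3$ cannot be linearized by \emph{any} legal merge order (including the algorithm's free choice of starting vertex, which is precisely why two shadow vertices fail) are the two delicate points; the remainder follows the template of Proposition~\ref{thm:badfamily}.
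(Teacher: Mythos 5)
Your overall device --- color-matched shadow vertices that the dynamic program prefers inside partial solutions but that are redundant at full size $k$ --- is the same one the paper uses (there the shadows are called \emph{bad copies}), and you correctly locate the central tension between attractiveness at size $<k$ and harmlessness at size $k$. However, there is a structural gap in your first and second ingredients. You assume that if $\mathcal{A}$ ever stores $\mathcal{S}$ then its merge history unrolls into a chain $A_1 \subsetneq \cdots \subsetneq A_k$ with $A_i = A_{i-1} \cup \{s_\ell\}$, and your entire third ingredient (the ``starting vertex'', the ``order in which the remaining colors are introduced'', the cyclic dependency $\ell \mapsto c_\ell$) is calibrated to such a linear order. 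The theorem quantifies over \emph{any} algorithm that obtains a colorset of cardinality $i$ by combining two colorsets of cardinality $<i$: the merge history is a binary tree whose internal nodes may join two multi-vertex pieces, so no single-vertex chain or linear introduction order exists. (This already fails for \algname\ itself, which combines $W(Q,u)$ with $W(R,v)$ with $|Q|,|R|\ge 2$; the one-vertex-at-a-time unrolling in Proposition~\ref{thm:badfamily} is an artifact of the star topology used there, where every merge edge is incident to a degree-one leaf.) The paper handles the general tree with a pigeonhole step your plan lacks: with three special vertices, any entry containing at least two of them must arise, somewhere in the tree, from merging two entries of combined size $<k$ each containing exactly one special vertex, and the adversary blocks exactly those merges. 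The reason two shadows do not suffice is that with only two, the merge to be blocked could be the final, size-$k$ merge, where the adversary cannot make the bad alternative win without contradicting the optimality of $\mathcal{S}$ --- not the start-vertex exemption you describe.

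A second gap concerns what the adversary is allowed to touch. Your gadget requires choosing the mutation profile of $\mathcal{S}$ itself (so that the heavy patients of $u_\ell$ are duplicated on $s_{c_\ell}$), but the theorem fixes an arbitrary optimal $\mathcal{S}$ and permits only the \emph{addition of three neighbors}. The paper avoids modifying $\mathcal{S}$ by making each bad copy $\bar v$ an exact color-and-connectivity clone of its good vertex $v$ (adjacent to the same vertices of $\mathcal{S}\setminus\{v\}$), which forces $W$ to carry a competing entry with the identical colorset in the column of $\bar v$ regardless of the internal topology of $\mathcal{S}$; only the mutations of the three added vertices are planted adversarially. To repair your argument you would need to concentrate all adversarial freedom in the three new vertices and extend the blocking argument from chains to arbitrary binary merge trees.
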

\begin{proof}[sketch]
Let the $k$ vertices of $OPT$ be deemed \textit{good} vertices. For each of three of the vertices in $OPT$ we add a \textit{bad} copy, so that the good vertex $v$ and the bad vertex $\bar{v}$ have the same color and the same connectivity to the vertices in $OPT\setminus \{v\}$.
By definition of $\mathcal{A}$, $\mathcal{S}$ is found by combining two subnetworks of cardinality $< k$, and because there are three good vertices in $OPT$, one of these subnetworks of cardinality $< k$ will contain at least two good vertices. We show that an evil adversary can ensure that two subnetworks $\mathcal{S}_1$ and $\mathcal{S}_2$, both being entries in $W$ and each containing a good vertex, will never be combined by $\mathcal{A}$.

The combination of $\mathcal{S}_1$ and $\mathcal{S}_2$ will happen across a specific edge in the graph between one vertex $v_1 \in\mathcal{S}_1$ and one vertex $v_2 \in\mathcal{S}_2$. If $v_2$ is a good vertex then there will be another subnetwork $\bar{\mathcal{S}_2}$ in $W$ with the same colorset as $\mathcal{S}_2$, namely in the column corresponding to the bad vertex $\bar{v_2}$, and since the connectivities of $v_2$ and $\bar{v_2}$ to $OPT$ are the same, $\mathcal{A}$ must select one of them. Due to the fact that $|\mathcal{S}_1 \cup \mathcal{S}_2| < k$ the adversary will be able to plant mutations so that $\bar{\mathcal{S}_2}$ is chosen over $\mathcal{S}_2$.
If $v_2$ is neither a good nor a bad vertex the same argument can be made to show that the adversary can ensure that $\mathcal{S}_2$ will not contain any good vertices.
\end{proof}

The following is a result that we need to prove the performance of \algname\ under the Planted Subnetwork Model.

\begin{proposition}
\label{prop:sum0}
For every censoring vector $c$: $\sum_{i=1}^m w_i = 0$.
\end{proposition}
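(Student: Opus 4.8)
The plan is to prove the identity directly by interchanging the order of summation in the definition of the $w_i$; there is no real obstacle here, the statement is a one-line computation once the double sum is reorganized.

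First I would expand
\begin{equation*}
\sum_{i=1}^m w_i = \sum_{i=1}^m \left( c_i - \sum_{j=1}^{i} \frac{c_j}{m-j+1} \right) = \sum_{i=1}^m c_i - \sum_{i=1}^m \sum_{j=1}^{i} \frac{c_j}{m-j+1}.
\end{equation*}
Next I would swap the order of summation in the double sum: the term $\frac{c_j}{m-j+1}$ is counted once for every $i$ with $j \le i \le m$, i.e. exactly $m-j+1$ times, so
\begin{equation*}
\sum_{i=1}^m \sum_{j=1}^{i} \frac{c_j}{m-j+1} = \sum_{j=1}^m (m-j+1)\,\frac{c_j}{m-j+1} = \sum_{j=1}^m c_j.
\end{equation*}
Substituting back gives $\sum_{i=1}^m w_i = \sum_{i=1}^m c_i - \sum_{j=1}^m c_j = 0$, which is the claim.

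As a sanity check (and an alternative one-line argument), one can instead invoke Proposition~\ref{prop:weightedlogrank} with the all-ones vector $\mathbf{x} = \mathbf{1}$: then $m_1 = m$ and $x_j - \frac{m_1 - \sum_{i=1}^{j-1} x_i}{m-j+1} = 1 - \frac{m-(j-1)}{m-j+1} = 0$ for every $j$, so $\sum_{j=1}^m w_j = \sum_{j=1}^m w_j \cdot 1 = V(\mathbf{1}) = 0$. I would present the direct computation as the main proof since it is self-contained, and perhaps mention the second viewpoint as a remark. The only thing to be careful about is the bookkeeping on the range of $j$ in the inner sum, which is why I would spell out the "counted $m-j+1$ times" step explicitly.
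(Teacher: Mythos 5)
Your proof is correct, and it takes a cleaner route than the paper's. The paper first verifies the identity for the all-ones censoring vector (where the interchange of summation gives $\sum_{i=1}^m\sum_{j=1}^i \frac{1}{m-j+1}=m$), and then argues separately that flipping any single $c_i$ from $1$ to $0$ changes the total by $-1+\sum_{j=i}^m\frac{1}{m-i+1}=0$, so the sum stays zero for every censoring vector. You instead perform the interchange directly with general coefficients $c_j$: the term $\frac{c_j}{m-j+1}$ is counted exactly $m-j+1$ times, so the double sum collapses to $\sum_{j=1}^m c_j$ and cancels the first sum. This one-shot computation subsumes both steps of the paper's argument and avoids the base-case-plus-perturbation structure. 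Your secondary observation --- evaluating $V(\mathbf{1})$ termwise, where $m_1=m$ forces every summand $1-\frac{m-(j-1)}{m-j+1}$ to vanish, and then invoking Proposition~\ref{prop:weightedlogrank} --- is also valid and arguably the most conceptual of the three, since it explains \emph{why} the weights must sum to zero: putting every sample in $P_1$ cannot produce any survival difference. Either of your arguments would be an acceptable, and slightly tidier, replacement for the paper's.
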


\begin{proof}[Proof (Sketch)]
When $c_i=1$ for all $1 \le i \le m$, the we have
\begin{eqnarray*}
\sum_{i=1}^m w_i & = & \sum_{i=1}^m \left( c_i  - \sum_{j=1}^{i}\frac{c_j}{m-j+1}\right) \\
& = & \sum_{i=1}^m \left( 1  - \sum_{j=1}^{i}\frac{1}{m-j+1}\right) \\
& = & m - \sum_{i=1}^m \sum_{j=1}^i \frac{1}{m-j+1}\\
& = & m - \sum_{i=1}^m i \frac{1}{i}\\
& = & m - m \\
& = & 0.
\end{eqnarray*}

When one $c_i$ is switched to the value $0$, we have that the weight changes by a factor:
\begin{equation}
-1 + \sum_{j=i}^{m} \frac{1}{m-i+1} = 0
\end{equation}
where the $-1$ is subtracted to $w_i$, while the value $\frac{1}{m-i+1}$ is summed (i.e., not subtracted) to all terms $w_j$ with $j \ge i$. Therefore, any change to the censoring vector leaves $\sum_{i=1}^m w_i = 0$.
\end{proof}

Using the above, we can prove the following.

\begin{theorem}
Let $M$ be a mutation matrix corresponding to $m$ samples from the Planted Subnetwork Model. If $m \in \BOM{k^4(k+\varepsilon) \ln n}$ for a given constant $\varepsilon > 0$ and $\BO{\ln(1/\delta)e^k}$ color-coding iterations are performed, then our algorithm identifies the optimal solution $\Pathway$ to the \problemgraphname\ with probability $\ge 1 - \frac{1}{n^\varepsilon} - \delta$.
\end{theorem}

\begin{proof}[Proof (Sketch)]
Assume that $\Pathway$ is colorful.  We prove that if \algname\ has build a subnetwork (with $1 \le i < k$ vertices) consisting of vertices of $\Pathway$ only, then
if $m \in \BOM{k^2(k+\varepsilon) \ln n}$, \algname\ will expand such solution by only using vertices in $\Pathway$. Since \algname\ starts to build solutions from each 
vertex in $\Pathway$, this proves that \algname\ identifies the optimal solution. We show this by proving that \emph{any} set $\C \subset \G \setminus \Pathway$, when added
to \emph{any} subset $\Set \subset \Pathway$, does not provide an improvement in the score as just adding one of the genes in $\Pathway$.

From the properties of the Planted Subnetwork Model (PSM), we have that if $\Set$ is a subset of $\Pathway$, then $w(S) \ge \frac{c' m}{k}$, where $c'$ is a constant $>0$. 
For a set $\C \subset \G \setminus \Pathway$, we can consider it as a ``metagene'' that is mutated with a certain probability $q$ (constant) in each sample, where $q$
depends on the genes in $\C$.

From Property~\ref{prop:sum0}, we have that $\E[w(\Set \cup \C) - w(\Set) ]  = - q w(\Set) \le - q \frac{c' m}{k}$, since the sum of all weights $w_i$ is 0 and $\C$ adds weights
from a set of weights that must sum to $- w(\Set)$. From the properties of PSM, for a gene $g \in \Pathway \setminus \Set$ we have $w(\Set \cup \{g\}) - w(\Set) \ge \frac{c'' m}{k}$, with $c'' > 0$ constant.
Note that $w(\Set \cup \C) - w(\Set)$ is the sum of independent random variables, and each random variable can change the value of $w(\Set \cup \C) - w(\Set)$ by a value $<m$.
Moreover, the number of samples in which $\C$ can have mutations while $\Set$ does not is at least $\frac{m}{k}$ and at most $m$. We can therefore use Hoeffding inequality to bound the probability that $w(\Set \cup \C) > w( \Set \cup \{g\} )$ as follows:
\begin{eqnarray*}
\Pr( w(\Set \cup \C ) > w( \Set \cup \{g\} ) ) & = & \Pr( w(\Set \cup \C ) - w(\Set) > w( \Set \cup \{g\} ) - w(\Set) ) \\
& \le & e^{- d( (\frac{m}{k})^2 (\frac{m}{k})^2 / m^3 } \\
& \le & \frac{1}{n^{k+\varepsilon}}
\end{eqnarray*}
for an appropriate constant $d>0$ and for $m\in\BOM{k^4(k+\varepsilon) \ln n}$. By union bound on all sets $\C$ of cardinality $\le k$, we have that $\Pr( w(\Set \cup \C ) > w( \Set \cup \{g\} ) ) \le \frac{1}{n^{k+\varepsilon}} n^k = \frac{1}{n^\varepsilon}$. Therefore, when $m\in\BOM{k^4(k+\varepsilon) \ln n}$ and $\Pathway$ is colorful, then \algname\ finds $\Pathway$ with probability $\ge 1 - \frac{1}{n^\varepsilon}$. The probability that $\Pathway$ is not colorful in any of the $\BO{\ln(1/\delta)e^k}$ color-coding iterations is $\le \delta$. Therefore, by union bound the probability that \algname\ does not identify $\Pathway$ when $m\in\BOM{k^4(k+\varepsilon) \ln n}$ is $\le \delta + \frac{1}{n^{\varepsilon}}$, and the result follows.
\end{proof}

\section{Modifications to \algname}

We design two modifications of \algname\ that can solve some easy cases where \algname\ may not identify the highest scoring solution due to its subnetwork merging strategy:
 \begin{enumerate}[i)] \item we merge a subnetwork $W(T,u)$ not only with subnetworks $W(R,v)$ where $v$ is a neighbor of $u$, but with subnetworks $W(R,w)$ where $w$ is a neighbor of \emph{any} vertex in $W(T,u)$; \item in $W(T,u)$, we store $\ell > 1$ different colorful subnetworks containing $u$ and with colorset $T$, leading to $\le \ell^2$ choices for combining two entries of $W$ and a corresponding $\ell^2$ increase in the time complexity of the algorithm.
\end{enumerate}
We note that the time complexity required by modification i) above is still polynomial at most a factor $|V|^2/|E| \in \BOM{n}$ larger than that of \algname. We note that both modifications find the optimal solution in the problem instance of Proposition~\ref{thm:badfamily}, while the second one will find the optimal solution in the problem instance of Theorem~\ref{thm:colorbad} if $\ell$ is large enough. 
The second modification was run using $\ell=5$ in our experiments and storing in $W(T,u)$ the $\leq \ell$ highest scoring subnetworks in $\mathcal{S}^\prime(T,u)$.

\section{Greedy algorithms}

We considered three different greedy strategies for the \problemgraphname\ problem. All three algorithms build solutions starting from each node $u \in G$ in iterations by adding nodes to the current solution $\mathcal{S}$, and differ in the way the enlarge the current subnetwork $\mathcal{S}$ of size $1\leq i < k$.
The first, \naive, screens all vertices at distance $1$ to $\mathcal{S}$ and adds the one that results in the best subnetwork of size $i+1$. The second, \fss, considers all vertices at distance $\leq k-i$ to $\mathcal{S}$, and enforces connectivity by greedily constructing a path from the selected vertex to a vertex in $\mathcal{S}$. The third, \dfs, traverses shortest paths from $\mathcal{S}$ to every vertex at distance $\leq k-i$ by a depth-first search. The vertices on some shortest path of length $j \leq k-i$ which improved $\mathcal{S}$ the most are added to obtain a subnetwork of size $i+j$.

\newpage
\section{Pseudo code for NoMAS}
The pseudo code for \algname\ is divided into three algorithms. First, algorithm~\ref{alg:nomas} highlights the overall color-coding scheme. Second, algorithm~\ref{alg:fillTable} describes how the dynamic programming table $W$ is computed in order of increasing colorset group sizes. Finally, algorithm~\ref{alg:computeEntry} details the process of computing the subnetwork at a specific entry in $W$. It is assumed that the undirected graph $G(V,E)$, the mutation matrix $M$ and the survival information $\mathbf{x}, \mathbf{c}$ are globally known. As a companion piece to algorithm~\ref{alg:computeEntry}, figure~\ref{fig:computeEntryNormal} visualizes the method used for combining two previously computed entries of $W$.

\begin{algorithm}[H]
\SetAlgoLined
\label{alg:nomas}
$\Set \leftarrow \emptyset$\\
\For{\upshape $i\leftarrow 1$ \textbf{to} $ln(\frac{1}{\delta})e^k$} {
	Color the vertices of $G$ with $k$ colors uniformly at random\\
	$W \leftarrow$ \textsc{FillTable}($k$)\\
	$\Set^\prime \leftarrow \displaystyle\argmax_{\forall T \forall v~:~ W(T,v) \in W}\left\{w(W(T,v))\right\}$\\
	$\Set \leftarrow \displaystyle\argmax \left\{w(\Set), w(\Set^\prime)\right\}$ 
}
\Return $\Set$
\caption{\textsc{NoMAS}($k$, $\delta$)}
\end{algorithm}

\begin{algorithm}[H]
\SetAlgoLined
\label{alg:fillTable}
$W \leftarrow$ empty table with dimensions $(2^k-1) \times |V|$\\
\For{\upshape\textbf{each} vertex $u\in V$} {
	\For{\upshape\textbf{each} color $\alpha$ among the $k$ colors} {
		\If{\upshape the color of $u$ is $\alpha$} {
			$W(\{\alpha\},u) \leftarrow \{u\}$
		}
		\Else{
			$W(\{\alpha\}, u) \leftarrow \emptyset$
		}
	}
}
\For{\upshape$i\leftarrow 2$ \textbf{to} $k$} {
	\tcc*[h]{\footnotesize The following may be distributed among $N \leq |V|$ processors}\\
	\For{\upshape\textbf{each} vertex $u\in V$} {
		\For{\upshape\textbf{each} colorset $T$ of size $i$} {
			$W(T,u) \leftarrow$ \textsc{ComputeEntry}($T$, $u$)
		}
	}
}
\Return $W$
\caption{\textsc{FillTable}($k$)}
\end{algorithm}

\begin{algorithm}[H]
\SetAlgoLined
\label{alg:computeEntry}
best $\leftarrow \emptyset$\\
\For{\upshape\textbf{each} neighbor $v$ of $u$} {
	\For{\upshape\textbf{each} colorset $Q$ s.t. $Q \subset T$ and $Q \neq \emptyset$} {
		$R \leftarrow T\setminus Q$\\
		\If{\upshape $W(Q,u) \neq \emptyset$ \textbf{and} $W(R,v) \neq \emptyset$} {
			candidate $\leftarrow W(Q, u) \cup W(R, v)$\\
			best $\leftarrow \argmax\{w($candidate$), w($best$)\}$
		}
	}
}
\Return best
\caption{\textsc{ComputeEntry}($T$, $u$)}
\end{algorithm}

\begin{figure}[h!]
\centering
\includegraphics[width=0.99\textwidth]{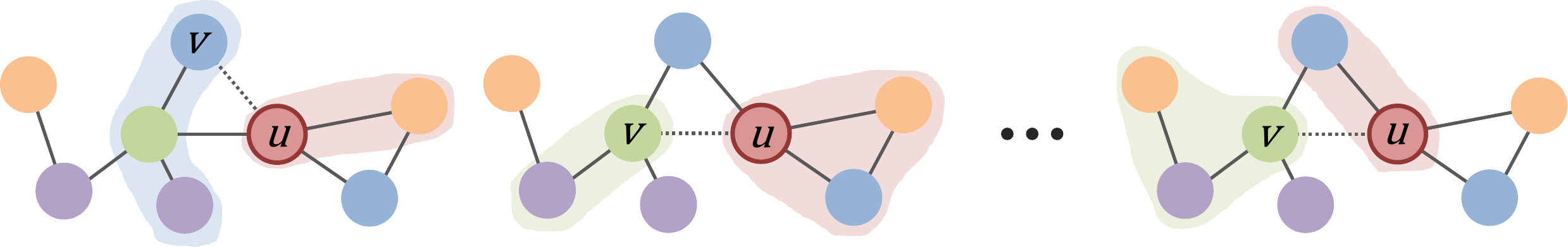}
\caption{Examples of several pairs of colorful connected subnetworks $W(Q, u)$ and $W(R, v)$ considered by \algname\ when computing the entry $W(T,u)$ for a colorset $T$ of size $5$. In each example a subnetwork containing $u$ are combined with a subnetwork containing a neighbor $v$ of $u$, in order to obtain a subnetwork with colorset $T = Q \cup R$ such that $Q \cap R = \emptyset$. The dotted edge is the one connecting the two subnetworks (the edge is always connected to $u$).}
\label{fig:computeEntryNormal}
\end{figure}

\paragraph{Modifications} The two proposed modifications to \algname\ differ from \algname\ in their method for computing an entry of $W$. Algorithm~\ref{alg:computeEntryModified1} describes modification i, while algorithm~\ref{alg:computeEntryModified2} details modification ii. Both algorithms should be seen as replacements for algorithm~\ref{alg:computeEntry} of the unmodified version of \algname. Figure~\ref{alg:computeEntryModified2} visualizes the combination strategy of algorithm~\ref{alg:computeEntryModified1} (note the difference from figure~\ref{fig:computeEntryNormal}).

\begin{algorithm}[H]
\SetAlgoLined
\label{alg:computeEntryModified1}
best $\leftarrow \emptyset$\\
\For{\upshape\textbf{each} colorset $Q$ s.t. $Q \subset T$ and $Q \neq \emptyset$} {
		$R \leftarrow T\setminus Q$\\		
		\For{\upshape\textbf{each} neighbor $w$ of a vertex in $W(Q,u)$} {
		candidate $\leftarrow W(Q, u) \cup W(R, w)$\\
		best $\leftarrow \argmax\{w($candidate$), w($best$)\}$
	}
}
\Return best
\caption{\textsc{ModificationI}($T$, $u$)}
\end{algorithm}

\begin{algorithm}[H]
\SetAlgoLined
\label{alg:computeEntryModified2}
candidates $\leftarrow\emptyset$\\
\For{\upshape\textbf{each} neighbor $v$ of $u$} {
	\For{\upshape\textbf{each} colorset $Q$ s.t. $Q \subset T$ and $Q \neq \emptyset$} {
		$R \leftarrow T\setminus Q$\\		
		\For{\upshape\textbf{each} subnetwork $A\in W(Q,v)$} {
			\For{\upshape\textbf{each} subnetwork $B\in W(R,v)$} {
				candidates $\leftarrow$ candidates $\cup \{A \cup B\}$\\
			}
		}
	}
}
best $\leftarrow$ the $\ell$ distinct highest scoring subnetworks in candidates\\
\Return best
\caption{\textsc{ModificationII}($T$, $u$)}
\end{algorithm}

\begin{figure}[h!]
\centering
\includegraphics[width=0.99\textwidth]{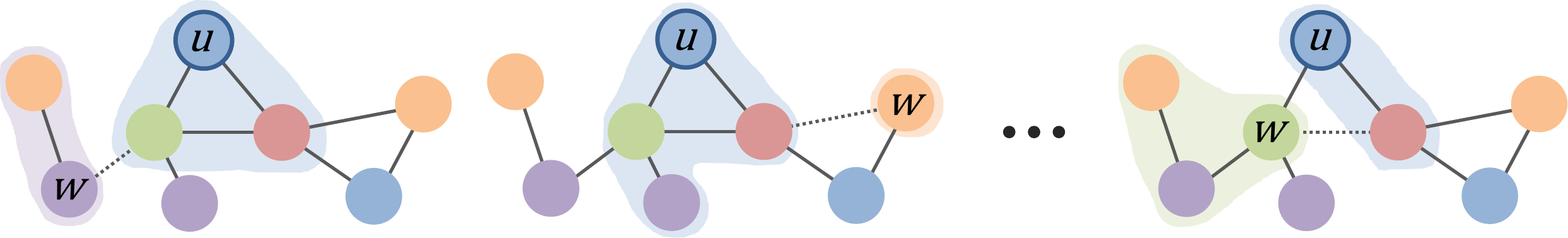}
\caption{Examples of several pairs of colorful connected subnetworks $W(Q, u)$ and $W(R, w)$ considered by \algname\ with modification i when computing the entry $W(T,u)$ for a colorset $T$ of size $5$. In each example a subnetwork $W(Q,u)$ containing $u$ are combined with a subnetwork containing a neighbor $w$ of some vertex in $W(Q,u)$, in order to obtain a subnetwork with colorset $T = Q \cup R$ such that $Q \cap R = \emptyset$. The dotted edge is the one connecting the two subnetworks.}
\label{fig:computeEntryModified}
\end{figure}

\newpage

\begin{suppfigure}[ht]
\centering
\includegraphics[scale=0.4]{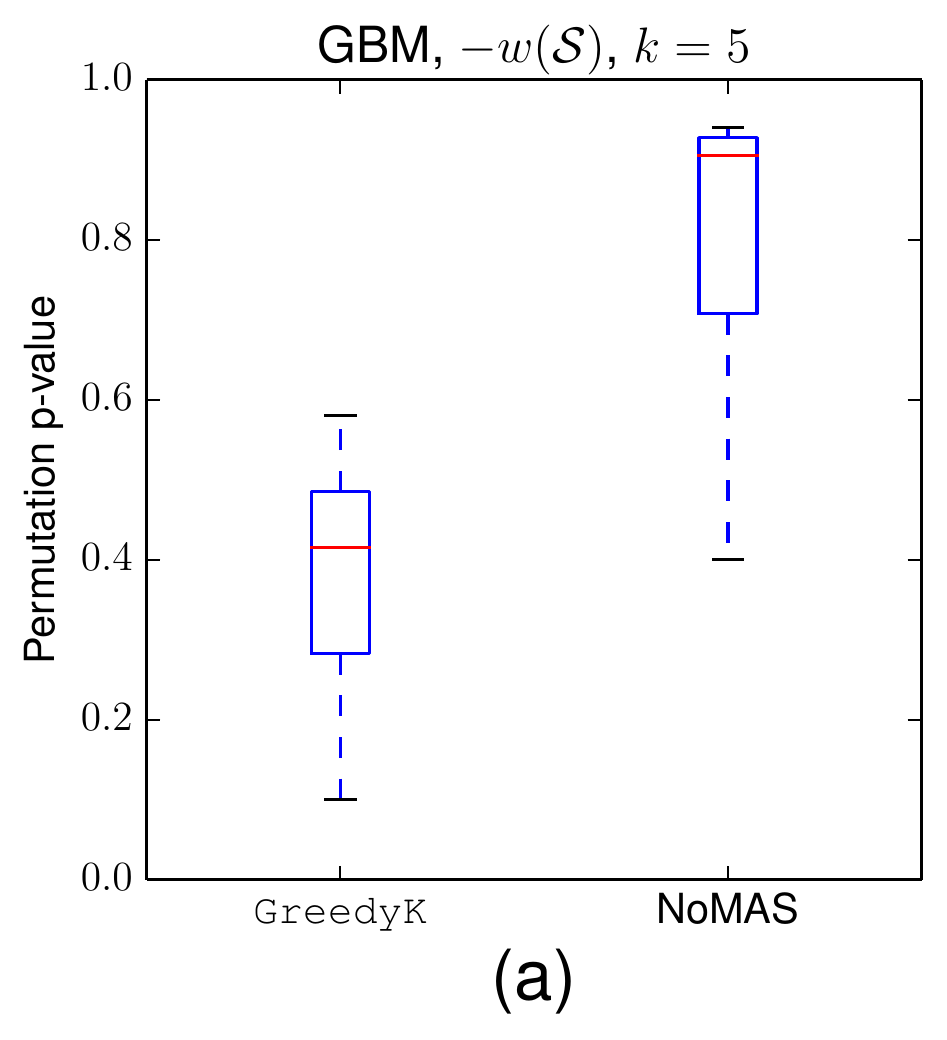}
\includegraphics[scale=0.4]{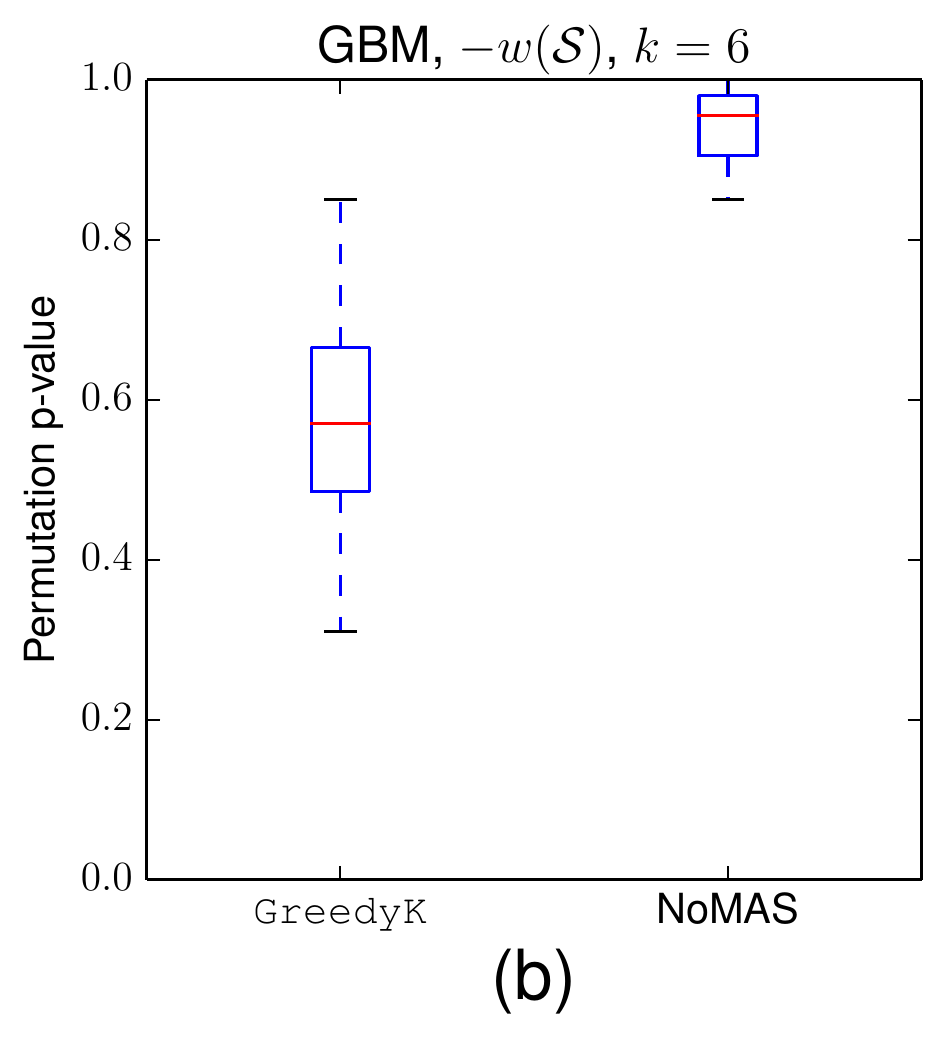}
\includegraphics[scale=0.4]{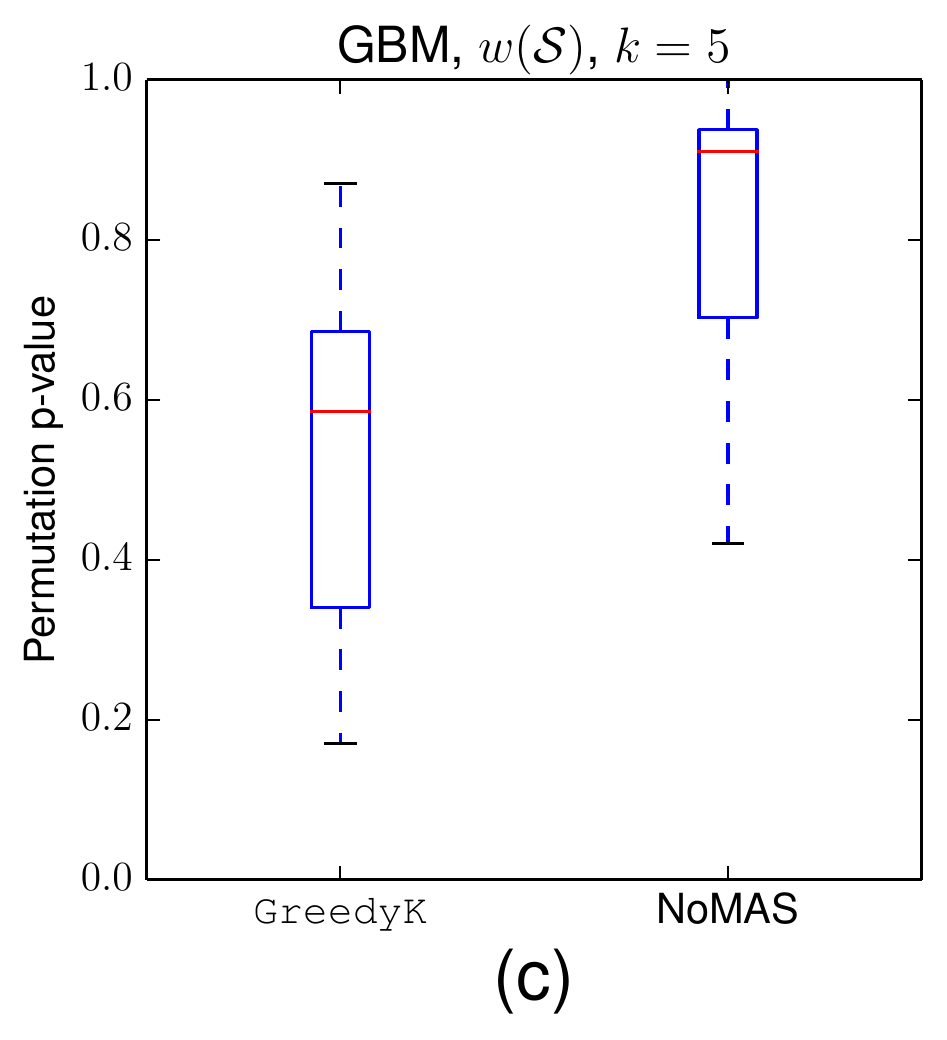}
\includegraphics[scale=0.4]{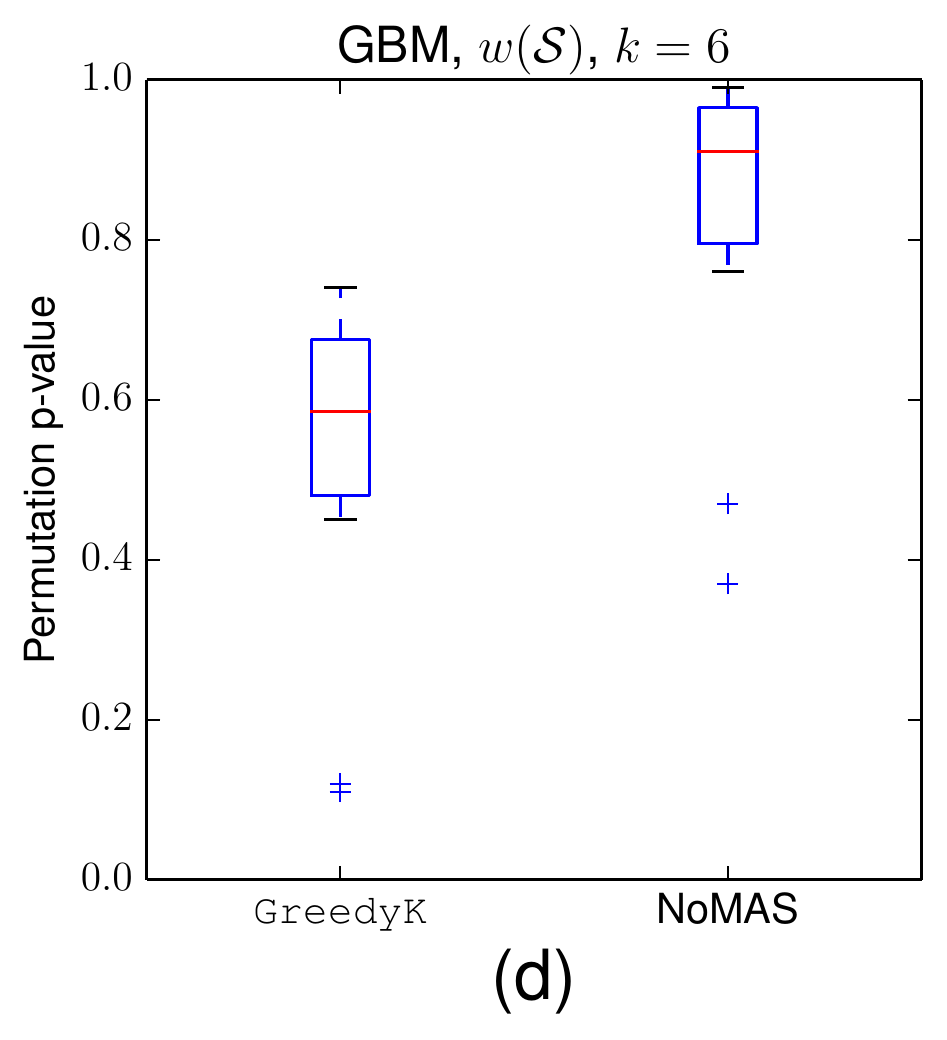}
\caption{p-values of the permutation test on the top $10$ solutions identified by \fss\ for different values of $k$ in both tail tests. The top $10$ solutions on the permuted data are obtained using both \fss\ and \algname\ (with $32$ color-coding iterations).}
\label{fig:GBMpval}
\end{suppfigure}

\begin{suppfigure}[ht]
\centering
\includegraphics[scale=0.45]{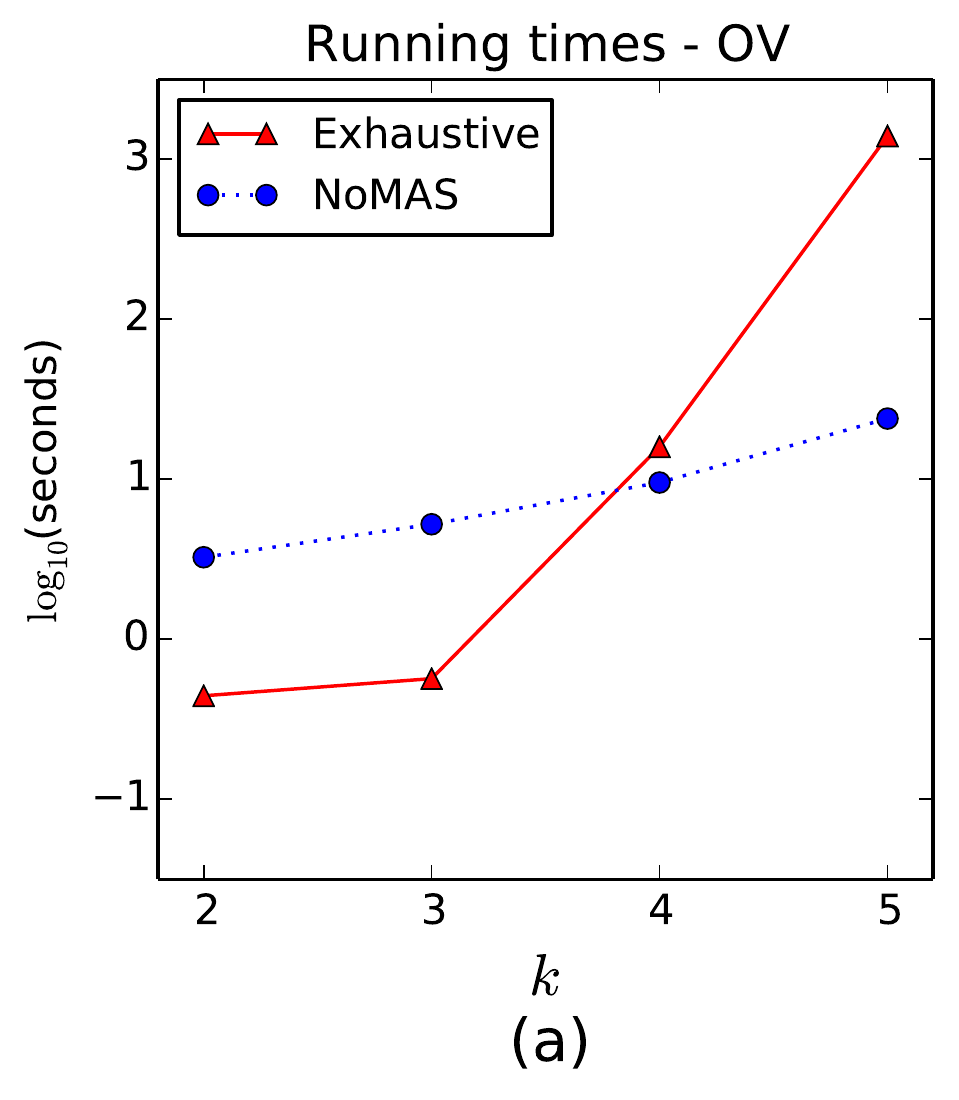}
\includegraphics[scale=0.45]{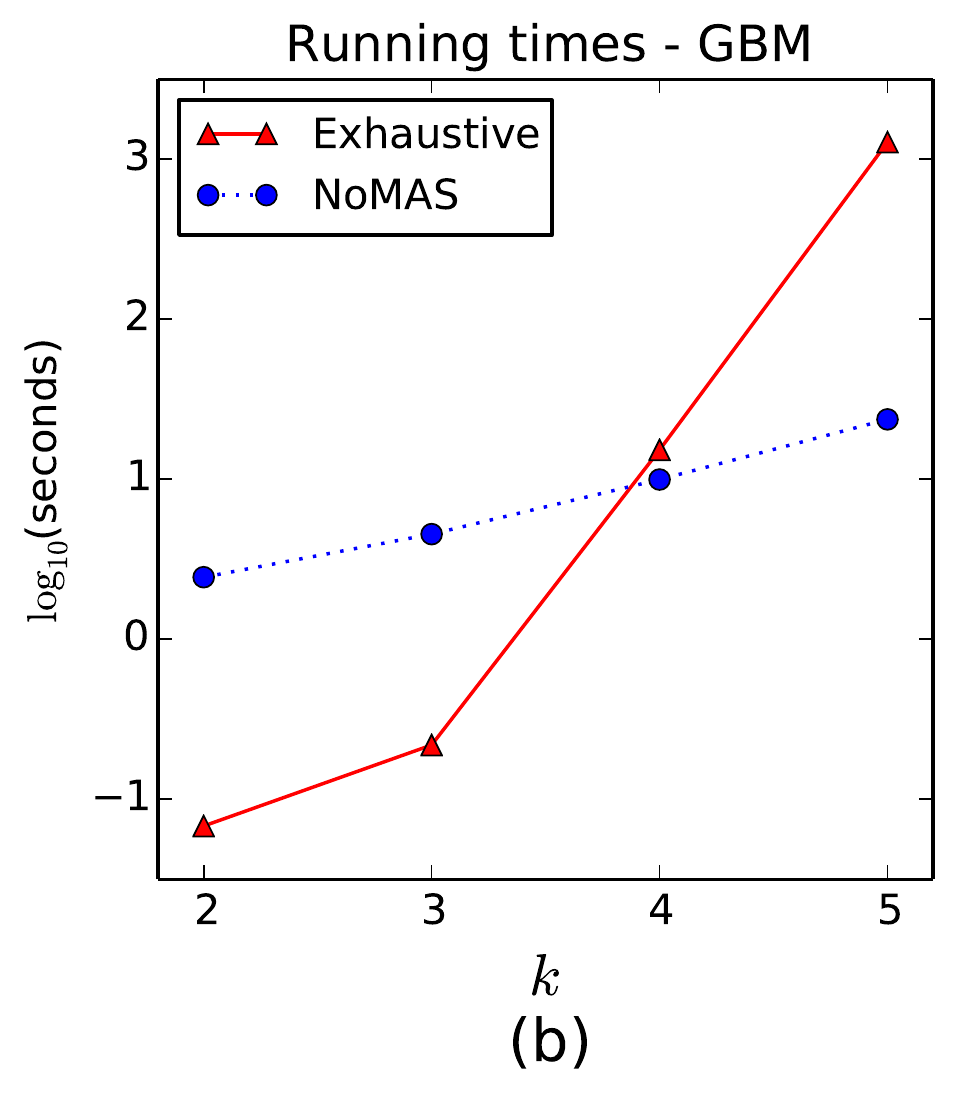}
\includegraphics[scale=0.45]{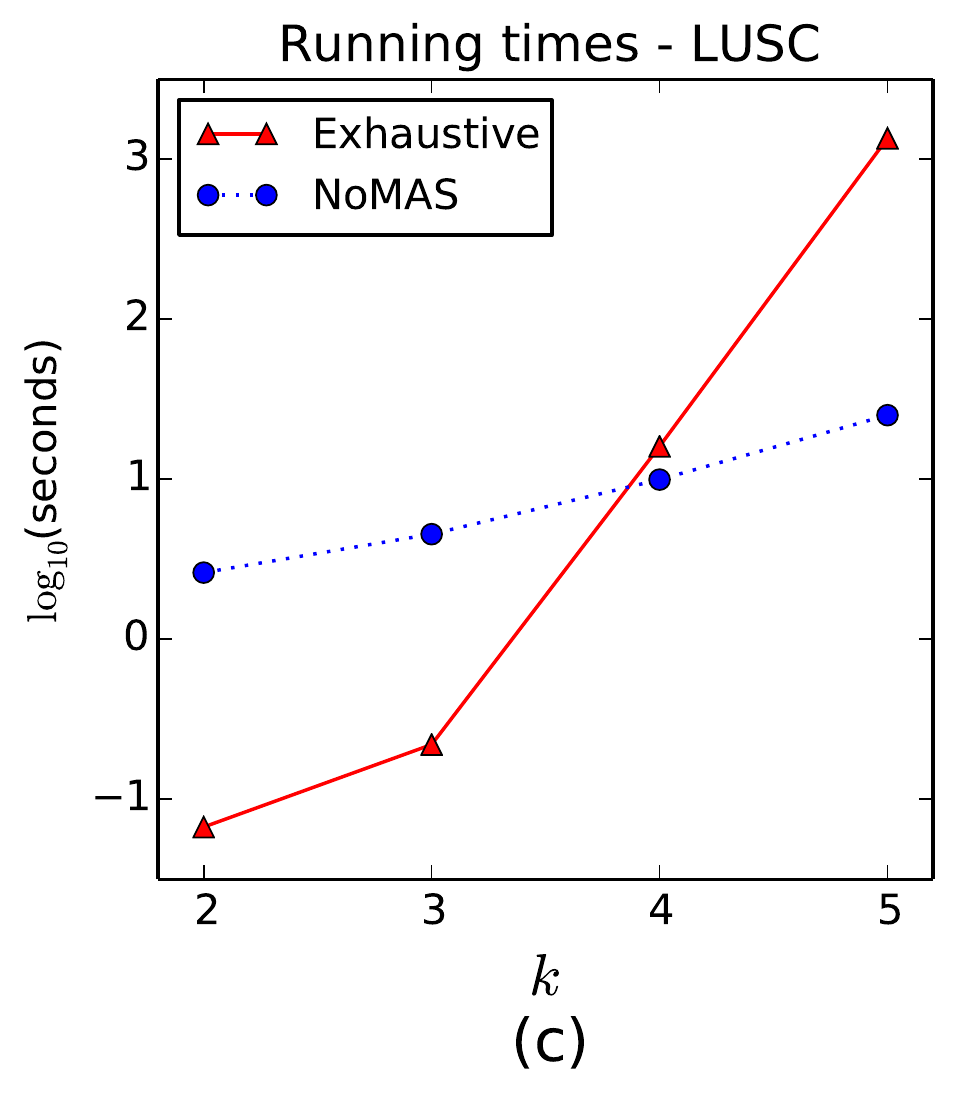}
\caption{Running time comparison between \algname\ and the exhaustive enumeration algorithm on three different cancer datasets. The running times of both algorithms are obtained using $40$ processors. The running times for \algname\ account for $256$ color-coding iterations and excludes the statistical assessment of the identified solutions.}
\label{fig:runtime}
\end{suppfigure}

\begin{suppfigure}[ht]
\centering
\includegraphics[scale=0.45]{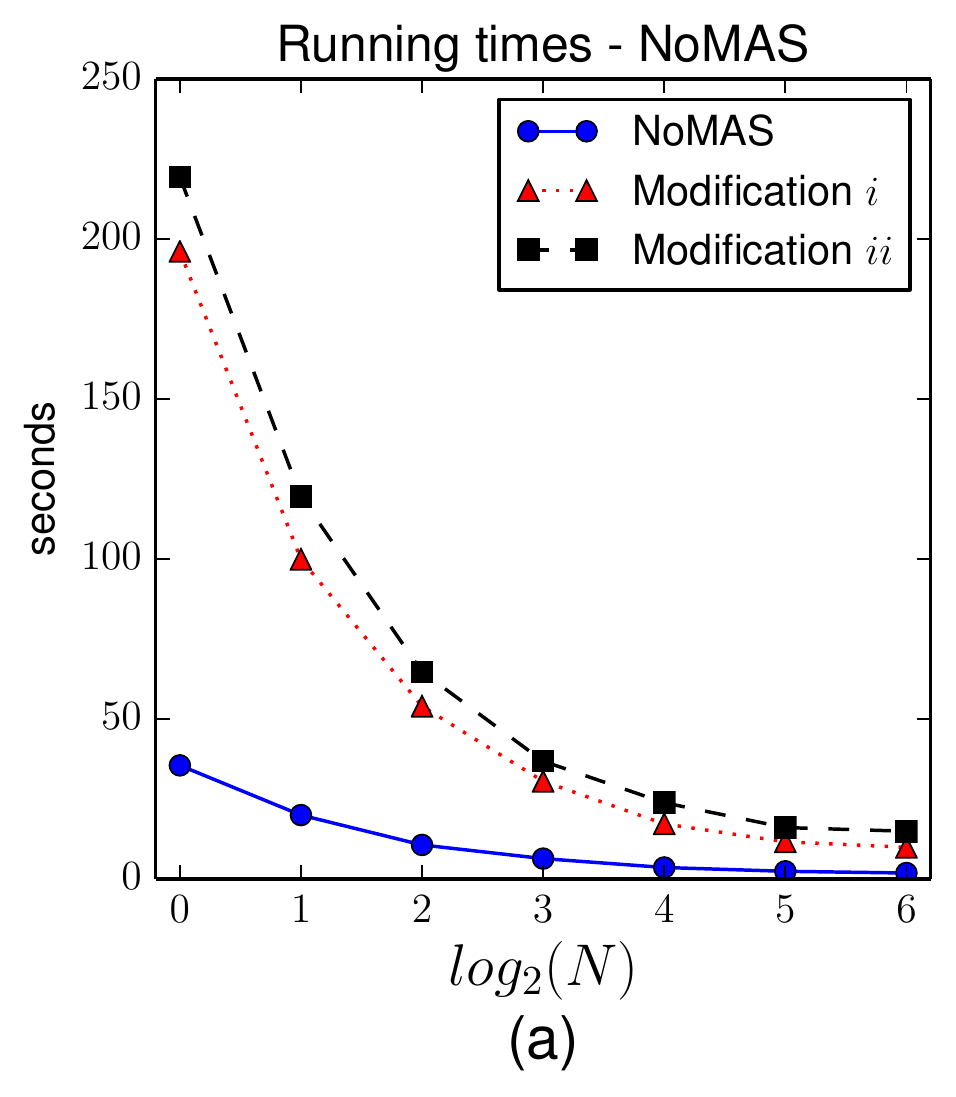}
\includegraphics[scale=0.45]{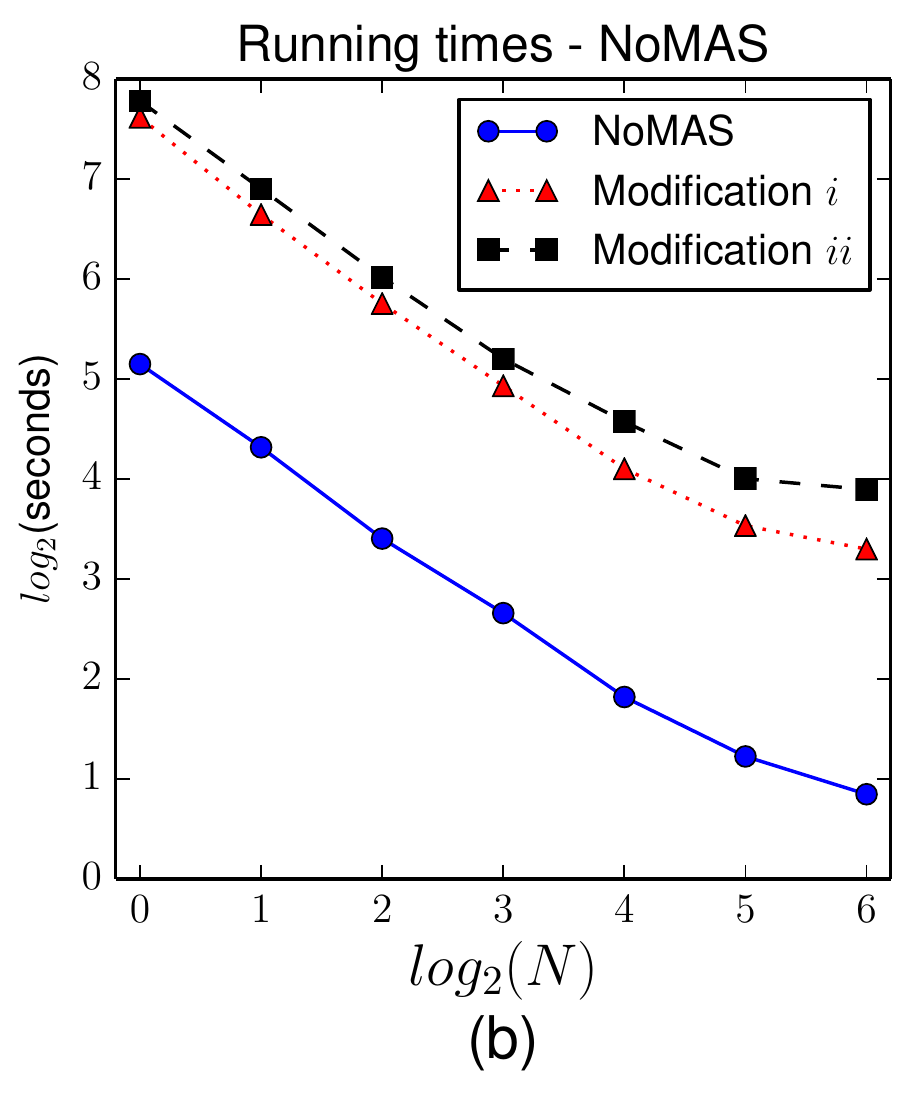}
\caption{Running times of \algname\ and the two modifications considered for varying numbers of processors $N$. The running times are for a single iteration for $k=8$ and are obtained on the OV cancer data (a) The running times in seconds. (b) The running times in seconds on a logarithmic scale.}
\label{fig:parallel}
\end{suppfigure}

\begin{suppfigure}[ht]
\centering
\includegraphics[scale=0.5]{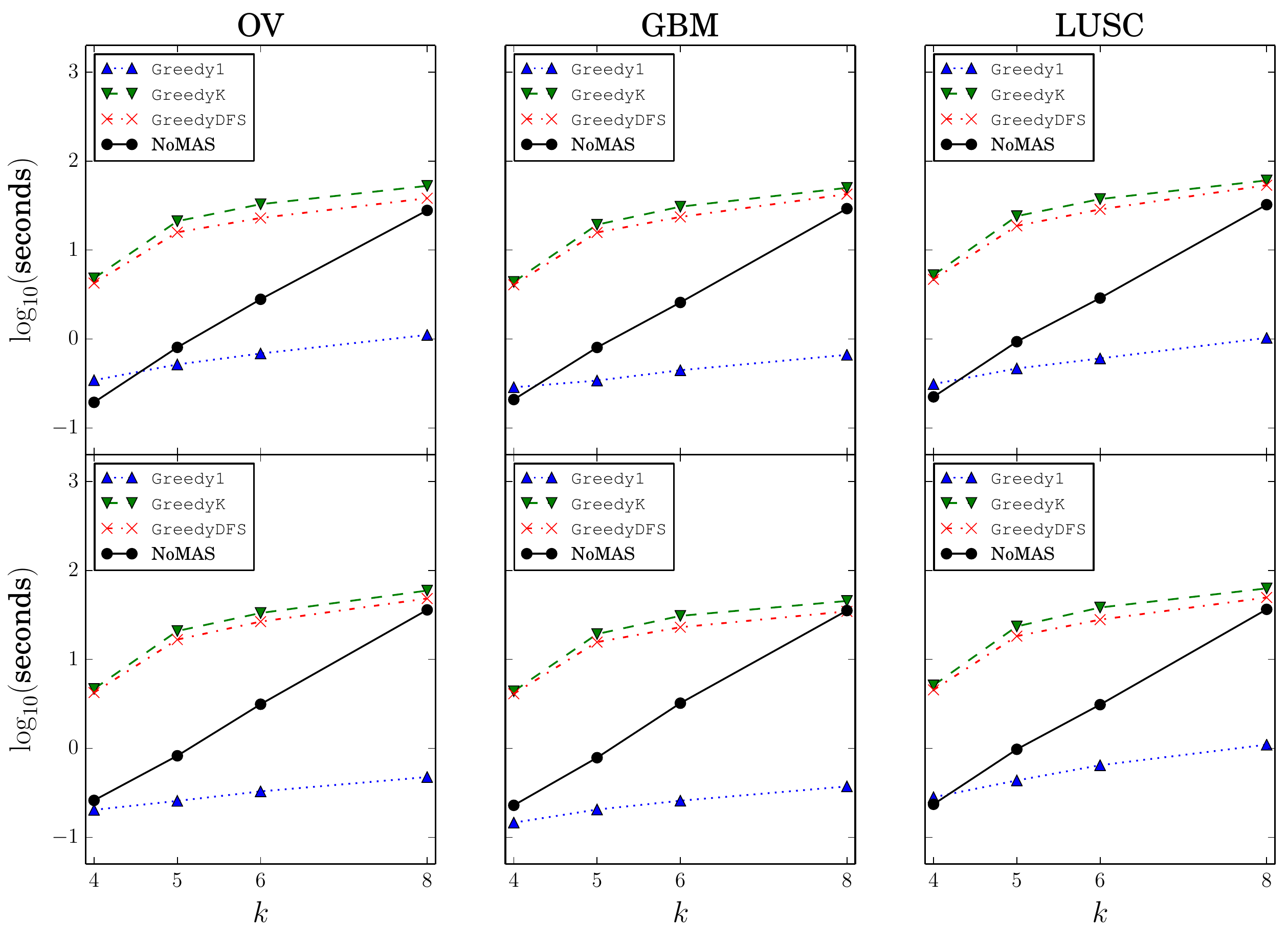}
\caption{Running times of the three greedy algorithms and a single color-coding iteration of \algname\ for varying values of $k$ and on three different cancer data. Each of the algorithms are run on a single processor. The top panels show the times measured when maximizing the score $w(\mathcal{S})$, while the bottom panels show the times for maximizing the score $-w(\mathcal{S})$.}
\label{fig:nomasGreedyTimes}
\end{suppfigure}

\end{document}